\documentclass[a4paper,UKenglish,cleveref,autoref,thm-restate,numberwithinsect]{lipics-v2021}


\bibliographystyle{plainurl}

\title{The Complexity of the Distributed Constraint Satisfaction Problem}


\author{Silvia Butti}{Department of Information and Communication Technologies, Universitat Pompeu Fabra, Spain \and \url{https://sites.google.com/view/silviabutti/}}{silvia.butti@upf.edu}{https://orcid.org/0000-0002-0171-2021}{The project that gave rise to these results received the support of a fellowship from ``la Caixa'' Foundation (ID 100010434). The fellowship code is LCF/BQ/DI18/11660056. This project has received funding from the European Union’s Horizon 2020 research and innovation programme under the Marie Skłodowska-Curie grant agreement No. 713673.}

\author{Victor Dalmau}{Department of Information and Communication Technologies, Universitat Pompeu Fabra, Spain \and \url{https://www.upf.edu/web/victor-dalmau}}{victor.dalmau@upf.edu}{https://orcid.org/0000-0002-9365-7372}{Victor Dalmau was supported by MICCIN grants TIN2016-76573-C2-1P and PID2019-109137GB-C22.}

\authorrunning{S. Butti and V. Dalmau}

\Copyright{Silvia Butti and Victor Dalmau}

\ccsdesc{Theory of computation~Constraint and logic programming}

\keywords{Constraint Satisfaction Problems, Distributed Algorithms, Polymorphisms}

\category{} 

\relatedversion{}

\supplement{}


\acknowledgements{We would like to thank Gergely Neu for useful discussions on the weighted majority algorithm.}

\nolinenumbers 

\hideLIPIcs  


\EventEditors{Markus Bl\"{a}ser and Benjamin Monmege}
\EventNoEds{2}
\EventLongTitle{38th International Symposium on Theoretical Aspects of Computer Science (STACS 2021)}
\EventShortTitle{STACS 2021}
\EventAcronym{STACS}
\EventYear{2021}
\EventDate{March 16--19, 2021}
\EventLocation{Saarbr\"{u}cken, Germany (Virtual Conference)}
\EventLogo{}
\SeriesVolume{187}
\ArticleNo{44}


\usepackage[ruled]{algorithm2e}
\usepackage{wasysym}
\usepackage{tikz}


\newcommand{\mb}[1]{\mathbf{#1}}
\newcommand{\eps}{\varepsilon}
\newcommand{\oo}{\textbf{\textsf{O}}}
\newcommand{\pol}{\textnormal{\textsf{Pol}}}
\newcommand{\gm}{\Gamma}
\newcommand{\csp}{\textnormal{CSP}}
\newcommand{\cspgm}{\csp(\gm)}
\newcommand{\dcsp}{\textnormal{DCSP}}
\newcommand{\dcspgm}{\dcsp(\gm)}

\newcommand{\dsearch}{\dcsp\textnormal{-Search}}
\newcommand{\dser}{\dcsp\textnormal{-Search}(\gm)}
\newcommand{\mymod}[1]{\textnormal{ (mod {$#1$})}}

\newcommand{\BLP}{\textnormal{BLP}}
\newcommand{\eq}{\operatorname{eq}}
\newcommand{\Alg}{\operatorname{Alg}}
\newcommand{\nwmn}{\textnormal{`}\newmoon\textnormal{'}}
\newcommand{\bltr}{\textnormal{`}\blacktriangle\textnormal{'}}
\newcommand{\networks}[2]{\begin{tikzpicture}
    \foreach \n in {1,...,6}{
        \node[circle,fill=black] at ({\n*360/6}:#1cm) (n\n) {};
    }
    \draw (n1)--(n2);
    \draw (n2)--(n3);
    \draw (n3)--(n4);
    \draw (n4)--(n5);
    \draw (n5)--(n6);
    \draw (n6)--(n1);
    \draw (n1)--(n4);
    \draw (n2)--(n5);
    \draw (n3)--(n6);
    \begin{scope}[xshift=#2cm] 
    \foreach \n in {1,...,6}{
        \node[circle,fill=black] at ({\n*360/6}:#1cm) (n\n) {};
    }
    \draw (n1)--(n2);
    \draw (n2)--(n3);
    \draw (n3)--(n4);
    \draw (n4)--(n5);
    \draw (n5)--(n6);
    \draw (n6)--(n1);
    \draw (n1)--(n5);
    \draw (n2)--(n4);
    \draw (n3)--(n6);
    \end{scope}
\end{tikzpicture}}

\begin{document}
\maketitle
\begin{abstract}

We study the complexity of the Distributed Constraint Satisfaction Problem (DCSP) on a synchronous, anonymous network from a theoretical standpoint. In this setting, variables and constraints are controlled by agents which communicate with each other by sending messages through fixed communication channels. Our results endorse the well-known fact from classical CSPs that the complexity of fixed-template computational problems depends on the template's invariance under certain operations. Specifically, we show that DCSP($\Gamma$) is polynomial-time tractable if and only if $\Gamma$ is invariant under symmetric polymorphisms of all arities. Otherwise, there are no algorithms that solve DCSP($\Gamma$) in finite time. We also show that the same condition holds for the search variant of DCSP.

Collaterally, our results unveil a feature of the processes' neighbourhood in a distributed network, its iterated degree, which plays a major role in the analysis. We explore this notion establishing a tight connection with the basic linear programming relaxation of a CSP.

\end{abstract}

\section{Introduction}

The Constraint Satisfaction Problem (CSP) consists of a collection of variables and a collection of constraints where each constraint specifies the valid combinations of values that can be taken simultaneously by the variables in its scope. The goal is to decide if there exists an assignment of the elements of a domain to the variables which satisfies all constraints. The CSP is a very rich mathematical framework that is widely used both as a fruitful paradigm for theoretical research, and as a powerful tool for applications in AI, such as scheduling and planning \cite{Rossi2006, krokhin2017constraint}.

While, in its full generality, the finite-domain CSP is known to be NP-complete, applying specific restrictions on the instances can yield tractable subclasses of the problem. One of the most studied approaches consists in requiring that, in each constraint, the set of allowed combinations for its values be drawn from a prescribed set $\Gamma$, usually called the constraint language or the template. Thanks to the proof of the CSP dichotomy conjecture obtained separately in \cite{Bulatov2017} and \cite{Zhuk2017}, which culminated a decades-long research program, it is possible to determine the complexity (P or NP-complete) of each family of CSPs, $\cspgm$, which is obtained by fixing $\Gamma$. This proof confirmed that the complexity of the constraint satisfaction problem is deeply tied to certain algebraic properties of the constraint language. Specifically, it depends on whether or not the constraint language is invariant under certain operations known as its polymorphisms. The polymorphisms of a constraint language enforce a symmetry on the space of solutions of a CSP instance that can possibly be exploited by an algorithm. This connection with algebra is also present in our work.

We study the computational complexity of the distributed counterpart of CSP, which is known as DCSP. This was introduced by Yokoo et al. \cite{yokoo1992distributed} as a formal framework for the study of cooperative distributed problem solving. In particular, we consider a deterministic, synchronous, anonymous network of agents controlling variables and constraints, and we study the complexity of message passing algorithms on this network. A number of practical applications can be encoded in the DCSP model, for instance resource allocation tasks in wireless networks, routing, networking, and mobile technologies (see for instance \cite{duffy2013decentralized,bejar2001distributed}). 
  
We notice that this framework is general enough to encompass some simple Graph Neural Network architectures (see for example \cite{morris2019weisfeiler, grohe2020word2vec}). In particular, when training a GNN to classify graphs, it is customary that the GNN network ignores the node label when updating its feature vector. This is, in fact, essential as otherwise there would be no way to apply the network trained on a given graph to another one. However, whereas in all variants of GNNs the computation is limited to a reduced number of operations over feature vectors, in the $\dcsp$ model the computation at each node is governed by an arbitrary algorithm. GNNs have a wide range of applications including molecule classification or image classification (see \cite{GNNs2018} for example). Recently, GNNs have been deployed to solve CSPs \cite{GroheGNNCSP2019}.

While there are a variety of well-performing distributed algorithms for constraint satisfaction and optimisation (see for instance \cite{yokoo2000algorithms,meisels2008distributed,Fioretto2018}), the theoretical aspects of distributed complexity are to date not well understood. In this paper we initiate the study of the complexity of $\dcsp$ parametrized by the constraint language, obtaining a complete characterization of its tractable classes. More specifically, building on the connection between the CSP and algebra, we show that for any finite constraint language $\gm$, the decision problem for $\dcspgm$ is tractable whenever $\gm$ is invariant under symmetric polymorphisms of all arities, where an operation is symmetric if its result does not depend on the order of its arguments. Otherwise, there are no message passing algorithms that solve $\dcspgm$. Collaterally, we show that the same holds for the search problem for DCSP.

Our work begins with the identification of a feature of the nodes in a distributed network, its iterated degree, which plays a major role in how messages are transmitted in the network. The iterated degree is an extension of the similar concept introduced in the study of the isomorphism problem which turns out to have a variety of alternative characterizations in terms of fractional isomorphisms, the Weisfeiler-Leman test, and definability with counting logics (see \cite{grohe2020word2vec}). It turns out that, due to the network anonymity, in every distributed algorithm all equivalent agents (with respect to iterated degree) must necessarily behave identically at each round. A similar phenomenon has been observed independently in the context of GNNs in \cite{morris2019weisfeiler,XuHowPowerful} leading to further study in \cite{Barcelo2020logical}.

We use this fact to show that, under the absence of symmetric polymorphisms of any arity in $\Gamma$, it is always possible to construct two instances of $\dcspgm$, one satisfiable and the other unsatisfiable, that cannot be distinguished by any message passing algorithm in an anonymous network.

On the other hand, invariance under symmetric polymorphisms is connected with the basic linear programming relaxation of a CSP instance. More precisely, if $\Gamma$ has symmetric polymorphisms of all arities then one can decide the satisfiability of every instance of $\cspgm$ by checking whether its basic linear programming relaxation is feasible (see for instance \cite{Barto2017}). Whereas it is not clear how to directly use this fact to obtain a distributed algorithm for $\dcsp(\Gamma)$, it can be applied to establish a structure theorem that unveils a simple yet surprising structure in the solution space of every satisfiable instance in $\dcsp(\Gamma)$: it must contain a solution that assigns the same value to all variables that have the same iterated degree. The proof of the structure theorem uses the weighted majority algorithm, a weight update method that is widely used in optimisation and machine learning applications (see \cite{Arora2012}). The structure theorem is key in the proof of the positive results as it allows to run an adapted variant of the $jpq$-consistency algorithm \cite{kozik2020solving} that overcomes the absence of unique identifiers for the variables, by using instead their iterated degree.

This paper is organised as follows. In Section \ref{sec:prelims} we introduce some definitions and technical concepts about the DCSP model. In Section \ref{sec:BLP} we present the basic LP relaxation for CSPs and we show its connection to the symmetry on the solution space, culminating in the statement of the structure theorem. Section \ref{sec:complexity} is dedicated to the proof of the dichotomy theorem for the complexity of $\dcsp$, with the hardness results in Section \ref{sec:intractable}, the details of the distributed algorithm for tractable languages in Section \ref{sec:tractable}, and its extension to the search problem in Section \ref{sec:search}. In the Conclusion we discuss some directions into which our work could be extended. Finally, in the Appendix we add some technicalities and provide detailed proofs for all the claims that were made along the paper.

\section{Preliminaries}\label{sec:prelims}

\paragraph*{Constraint Satisfaction Problems.}
An instance $I$ of the finite-domain \textit{Constraint Satisfaction Problem} (CSP) is a triple $ (X,D,C)$ where $X$ is a set of variables, $D$ is a finite set called the domain, and $C$ is a set of constraints where a constraint $c \in C$ is a pair $(\mb{s}, R)$ where $R \subseteq D^{k}$ for $k$ a positive integer, $R$ is a relation over $D$ of arity $k$, and $\mb{s}$ is a tuple of $k$ variables, known as the \textit{scope} of $c$. We use $arity(\cdot)$ to denote the arity of a relation, tuple, or constraint and we write $x \in c$ for any variable $x$ in the scope of $c$. An \textit{assignment} $\nu: X \to D$ is said to be \textit{satisfying} if for all constraints $c = ( \mb{s}, R ) \in C$ we have $\nu(\mb{s}) \in R$, where $\nu$ is applied to $\mb{s}$ coordinate-wise. Usually we denote the number of variables by $n$ and the number of constraints by $m$. 

Let $\gm$ be a set of relations over some finite domain $D$, and let $\cspgm$ denote the set of CSP instances with all constraint relations lying in $\gm$. In this context, $\gm$ is known as the \textit{constraint language}. Throughout this paper, we will assume that $\gm$ is always finite. Then, the \textit{decision problem} for $\cspgm$ is the problem of deciding whether a satisfying assignment exists for an instance $I \in \cspgm$. The \textit{search problem} for $\csp(\gm)$ is the problem of deciding whether a satisfying assignment exists and, if it does, to find one such assignment. 

\paragraph*{The Distributed Model.}
We consider the DCSP model of \cite{yokoo1992distributed} with some small modifications. The basic idea is to assign the task of solving a constraint satisfaction problem to a multi-agent system. In the original model, which assumes that all constraints are binary \cite{yokoo1998distributed, yokoo2000algorithms}, the assumption is that each variable is controlled by an agent, and two agents can communicate with one another if and only if they share a constraint. Here we deviate slightly from the original model to allow for non-binary constraints and we assume that both variables and constraints are controlled by distributed agents in the network. An instance of the \textit{Distributed Constraint Satisfaction Problem} (DCSP) is a tuple $(A, X, D, C, \alpha )$, where $X$, $D$, and $C$ are as in the classical CSP, $A$ is a finite set of agents, and $\alpha: X \cup C \to A$ is a surjective function which assigns the control of each variable $x \in X$ and each constraint $c \in C$ to an agent $\alpha(x)$, $\alpha(c)$ respectively. For the purpose of this paper, we assume that there are exactly $n+m$ agents, and therefore each agent controls exactly one variable or one constraint. This can be done without loss of generality since any agent controlling multiple nodes can simulate multiple agents, each controlling a node. Under this assumption, there is a one-to-one correspondence between instances of CSP and DCSP, and thus we shall switch freely between them.

\paragraph*{Distributed Networks and Message Passing.} We now present some fundamental concepts relating to the message-passing paradigm for distributed networks. For a general introduction to distributed algorithms, we refer the reader to \cite{fokkink2013distributed}. A \textit{distributed system} consists of a finite set of nodes or processes, which are connected through communication channels to form a network. Any process in the network can perform events of three kinds: \textit{send}, \textit{receive} and \textit{internal}. Send and receive events are self-explanatory, as they denote the sending or receiving of a message over a communication channel. Any kind of local computation performed at the process level, as well as state changes and decisions, are classified as internal events.

We assume a fully \textit{synchronous} communication model, meaning that the send event at a process $a$ and the corresponding receive event at a process $a'$ can be considered \textit{de facto} as a unique event, with no time delay. As a whole, a synchronous system proceeds in rounds, where at each round a process can perform some internal computation and then send messages to and receive messages from its neighbours. A round needs to terminate at every process before the next round begins. Note that while for simplicity we assume a synchronous network, all our algorithms can be adapted to asynchronous systems by applying a simple synchronizer. Nonetheless, we point out that our negative results rely on the network operating in synchronous rounds.

We make the fundamental assumption that the network is \textit{anonymous}, meaning that variables, constraints and agents do not have IDs. For practical purposes, we still refer to variables and constraints with names (such as $x_{i}$, $c_{i}$), however these cannot be communicated through the channels. The assumption of anonymity can have various practical justifications: the processes may actually lack the hardware to have an ID, or they may be unable to reveal their ID due to security or privacy concerns. For instance, the basic architecture of GNNs is anonymous. This is a very desirable property as it allows to deploy GNNs in different networks than those in which they were trained.

We assume that all the processes run locally the same deterministic algorithm, therefore IDs cannot be created and deadlocks cannot be broken by for instance flipping a random coin. Hence, the lack of IDs makes the processes essentially indistinguishable from one another - except, as we will see later, for the structure of their neighbourhood in the network.

Leader election is a procedure by which the processes in a network select a single process to be the leader in a distributed way. If a leader is elected, then she can assign unique identifiers to every process. Moreover, all the information about the instance can be gathered to the leader, who can then solve the CSP locally. It is a well-known result that there does not exist a terminating deterministic algorithm for electing a leader in an anonymous ring \cite{Alguin1980Local}. Therefore, the assumptions of anonymity and determinism ensure that the DCSP model is intrinsically different from the (centralised) CSP framework, and open up the way for establishing novel, non-trivial complexity results. We remark that while considerable effort has been put into characterizing under what conditions an anonymous network is able to elect a leader \cite{Boldi1996Symmetry, Yamashita1988} or compute relations \cite{Boldi2001effective}, our work focuses on characterizing the complexity of the DCSP as parametrised by the constraint language. Therefore, all of our algorithms work regardless of the topology of the network, and hence regardless of whether or not a leader can be elected.

The encoding of a DCSP instance into the message passing framework is straightforward. The processes correspond to the agents of the network, and there is a labelled communication channel between a variable agent $\alpha(x)$ and a constraint agent $\alpha(c)$ if and only if $x \in c$. More formally, the \textit{Factor Graph} \cite{Fioretto2018} $G_{I}$ of an instance $I= (X,D,C)$ of $\csp$ is the undirected bipartite graph with vertex set $X \cup C$ 
and edge set $\{\{x,c\} \mid x \in c\}$. Each edge in $G_{I}$ that is incident to a variable $x$ and a constraint $c$ where $c=( \mb{s},R )$ has a label $\ell_{x,c}=(S,R)$ for $S=\{i \mid \mb{s}[i]=x\}$, where for a tuple $\mb{t}$, $\mb{t}[i]$ denotes the $i^{\textnormal{th}}$ entry of $\mb{t}$.\footnote{For mathematical clarity, we label edges with the relation itself. However, in algorithmic applications, every relation can be substituted with a corresponding symbol.} Then, the message passing network corresponds to the factor graph where every node (variable or constraint) is replaced by their associated agent and every edge by a communication channel of the same label. Note that between any two nodes there is at most one channel. If privacy is a concern, we point out that labeling channels does not reveal any more information about the processes than what is strictly necessary for the problem instance to be well defined. Unless explicitly stated we only consider instances whose factor graph consists of a unique connected component. It is easy to prove (see the Remark \ref{re:alternativemodel} in the Appendix) that in the case that all relations are binary, the original model where only variables are controlled by agents is equivalent to our model. 
 
At the start of an algorithm, a process only has access to very limited information. All processes know the total number $n$ of variables in the CSP instance, the total number $m$ of constraints, the labels of the communication channels that they are incident to in the network, and naturally whether they are controlling a variable or a constraint. During a run of the algorithm a process can acquire further knowledge from the messages that it receives from its neighbours. We assume that at any time each process is in one of a set of states, a subset of which are terminating states. When it enters a terminating state, a process performs no more send or internal events, and all receive events are disregarded. The local algorithm is then a deterministic function which determines the process' next state, and the messages it will send to its neighbours. The output of such function only depends on the process' current knowledge, on its state, and on the global time. We allow processes to send different messages through different channels. However, since processes can only distinguish the channels based on their labels, identical messages must be sent through channels with identical labels. Note that the power of the model would not decrease if only one message was allowed to be passed through all the channels, since a process can simulate sending a separate message through each channel by tagging each message with the label of the desired channel and concatenating them in a unique string. This, however, comes at the cost of increased message size. Moreover, if a process needs to broadcast multiple messages, these can be concatenated into one. We say that an algorithm terminates when all processes are in a terminating state.

We say that a distributed algorithm solves an instance $I$ of $\dcsp$ if the algorithm terminates and the terminating state of every process correctly states that $I$ is satisfiable if it is, and that it is not satisfiable otherwise. Moreover, we consider the search version of $\dcsp$, denoted 
$\dsearch$. In the search version, if the input instance $I$ is satisfiable, the terminating state of every variable process $\alpha(x)$ must additionally specify a value $\nu(x)\in D$ such that $\nu: X \to D$ is a satisfying assignment. For every constraint language $\gm$, we denote by $\dcspgm$ and $\dser$  the restrictions of $\dcsp$ and $\dsearch$, respectively, to instances containing only constraint relations from $\Gamma$.

In terms of algorithmic complexity, there are a number of measures that can be of interest. Time complexity, which is our primary concern, corresponds to the total amount of time required for the algorithm to terminate, including the time needed for internal events. This is closely related to the number of rounds of the algorithm, which is another measure that we are concerned with. Message complexity and bit complexity measure the total number of messages and bits exchanged respectively. These can be bounded easily from the maximum size of a message.

\paragraph*{Iterated Degree and Degree Sequence.}
We present a number of concepts from graph theory that carry over to CSPs. Their adaptation to DCSPs is straightforward in all cases. In an undirected graph $G$, the degree of a vertex $v$ is the number of edges incident at $v$. The zeroth iterated degree of $v$ is equal to its degree. For $k \geq 1$, the $k^{th}$ iterated degree of $v$ is the multiset of $(k-1)^{th}$ degrees of $v$'s neighbours in $G$. The $k^{th}$ iterated degree sequence of a graph is the multiset of $k^{th}$ iterated degrees of its vertices.

\begin{example}
In the context of graph theory the \textit{colour refinement algorithm}, which calculates the iterated degree sequence of a graph, is often used as a simple heuristic for the graph isomorphism problem. If two graphs are isomorphic then they must have the same iterated degree sequence, but the opposite is not true (see for example Figure \ref{fig:networks}).\lipicsEnd

\end{example}
\begin{figure}[t]
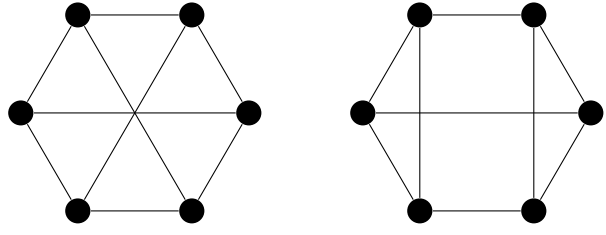
 
    \centering
    \networks{1.5}{4.5} 
    \caption{Both graphs depicted above are 3-regular and hence they have the same iterated degree sequence. However, they are clearly not isomorphic, since the left graph is bipartite while the right one is not.}
    \label{fig:networks}
\end{figure}

We extend the notion of iterated degree to CSPs as follows. Consider the labelled factor graph $G_{I}$ of an instance $I$ described in the previous paragraph. In what follows it will be convenient to allow instances $I$ with a disconnected factor graph $G_{I}$. Let $v$ be a node of $G_{I}$ and denote its neighbourhood in the factor graph by $N(v)$. The (zeroth) \textit{degree}, denoted $\delta_{0}(v)$, of a node in the factor graph is simply a symbol that distinguishes variables from constraints: we set $\delta_{0}(x)= \nwmn$ for all $x \in X$ and $\delta_{0}(c)=\bltr$ for all $c \in C$. The \textit{$k^{th}$ iterated degree}\footnote{We remark that the notions of degree and iterated degree are well-defined concepts in graph theory. We borrow this terminology to refer to the analogous concepts in CSPs.} ($k \geq 1$) of a node $v$ is defined as $\delta_{k}(v) =\{(\ell_{v,w}, \delta_{k-1}(w)) \mid w \in N(v)\}$. We write $v \sim^{k}_{\delta} v'$ if $\delta_k(v)=\delta_k(v')$, and simply $v \sim_{\delta} v'$ if $v \sim^{k}_{\delta} v'$ for all $k\geq 0$. In this case, we say that $v$ and $v'$ are \textit{iterated degree equivalent}. It can be shown (see Proposition \ref{prop:2nisEnough}) that as $k$ increases, the partition induced by $\sim_{\delta}^{k}$ gets more refined, and indeed it reaches a fixed point for some $k \leq 2n$ where $n=|X|$. The notion of iterated degree is strikingly relevant in our work as it captures what it means for two processes in a network to be indistinguishable. This implies that no distributed algorithm can differentiate between two iterated degree equivalent nodes, as we illustrate in the following result.

\begin{restatable}{proposition}{idsisameSol}
\label{prop:idsi->sameSol}
   Let $I=(A,X,D,C,\alpha)$ be an instance of $\dcspgm$ whose factor graph is not necessarily connected and consider two variables $v,v' \in G_{I}$. Then, $v \sim_{\delta} v'$ if and only if any terminating decision algorithm over $I$ outputs the same decision at $\alpha(v)$ and $\alpha(v')$. Furthermore, if $v,v'\in X$ and $I$ is satisfiable, then any terminating search algorithm outputs the same values $\nu(v) = \nu(v')$ at $\alpha(v)$ and $\alpha(v')$.
\end{restatable}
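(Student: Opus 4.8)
The plan is to derive the whole statement from a single invariance lemma: if two nodes are iterated degree equivalent then the corresponding agents are in exactly the same state after every round of every deterministic algorithm. This immediately yields the forward implication of the biconditional (identical states force identical terminating decisions), and since the lemma is stated for arbitrary nodes it covers the decision part whether $v,v'$ are variables or constraints. It also yields the ``Furthermore'' clause for free: a terminating search algorithm stores the output value $\nu(v)$ inside the terminating state of each variable agent, so identical states force $\nu(v)=\nu(v')$. For the converse of the biconditional I will exhibit one concrete algorithm that distinguishes any pair of non-equivalent nodes.

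For the invariance lemma I would argue by induction on the round number $t$, proving the slightly stronger statement that $u \sim_\delta^{t+1} u'$ implies $\alpha(u)$ and $\alpha(u')$ share the same state after round $t$ (indexing by level lets me avoid invoking the fixed point here). In the base case $t=0$ the initial state of an agent is determined only by the globally known numbers $n,m$, by the node type $\delta_0$, and by the multiset of labels of its incident channels; equality $\delta_1(u)=\delta_1(u')$ supplies exactly this data, so the initial states coincide. For the inductive step I rely on two features of the anonymous synchronous model: the message an agent sends through a channel is a deterministic function $g(\text{state},\ell,t)$ of its current state, the channel label $\ell$, and the global time (since equally labelled channels must carry equal messages), while its next state is a deterministic function of its old state, the global time, and the multiset of pairs (label, incoming message). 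Assuming the claim through round $t$ and passing to round $t+1$, I extract from $\delta_{t+2}(u)=\delta_{t+2}(u')$ a bijection $\phi:N(u)\to N(u')$ with $\ell_{u,w}=\ell_{u',\phi(w)}$ and $w\sim_\delta^{t+1}\phi(w)$ for every $w$. The induction hypothesis makes $w$ and $\phi(w)$ share a state after round $t$, so, using the edge-label symmetry $\ell_{w,u}=\ell_{u,w}$, they transmit identical messages to $\alpha(u)$ and $\alpha(u')$ respectively; since $u$ and $u'$ also share a state after round $t$, all inputs to the state-update function agree and the states after round $t+1$ coincide. Taking $v\sim_\delta v'$, which is equality at all levels, this gives identical states at every round, in particular at termination.

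The converse direction is constructive. I would first run distributed colour refinement: in round $k$ every agent broadcasts its current degree $\delta_{k-1}$ to all neighbours and reassembles $\delta_k(u)=\{(\ell_{u,w},\delta_{k-1}(w))\mid w\in N(u)\}$ from its known incident labels and the received values; all agents know $n$ and therefore halt this phase after $2n$ rounds. By Proposition~\ref{prop:2nisEnough} the partition is then stable, so $v\not\sim_\delta v'$ forces $\delta_{2n}(v)\neq\delta_{2n}(v')$. I then hard-code into the (uniform, ID-free) local algorithm the constant $\delta^\ast:=\delta_{2n}(v)$ and let each agent output ``yes'' iff its computed degree equals $\delta^\ast$; this is a legitimate anonymous algorithm, it terminates, and it outputs ``yes'' at $\alpha(v)$ but ``no'' at $\alpha(v')$, witnessing that the two agents are distinguishable.

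I expect the crux to be the inductive step of the invariance lemma, specifically the bookkeeping that turns the multiset equality $\delta_{t+2}(u)=\delta_{t+2}(u')$ into a single neighbour bijection respecting both the edge labels and the iterated degree at the exact level demanded by the induction hypothesis, together with the precise appeal to anonymity (equal labels imply equal messages, plus the edge-label symmetry) needed to match the messages received on the two sides. Everything else is routine once the model's formalization of ``message $=$ function of (state, label, time)'' is fixed.
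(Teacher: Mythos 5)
Your proposal is correct and follows essentially the same route as the paper: an induction showing that iterated-degree-equivalent nodes send identical messages and hence occupy identical states at every round (which gives both the decision and the search claims), plus the converse via distributed colour refinement and an algorithm that branches on the computed degree. Your write-up is more detailed on the level-indexed bookkeeping in the inductive step, but the underlying argument is the one the paper sketches.
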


The following is a direct consequence of Proposition \ref{prop:idsi->sameSol}. We say that two instances $I$ and $I'$ have the same \emph{iterated degree sequence} if there exists a bijection $\gamma$ between the nodes of $G_{I}$ and the nodes of $G_{I'}$ such that for every $k\geq 0$ and every node $v$ of $G_{I}$, the $k^{th}$ degree of $v$ in $I$ is equal to the $k^{th}$ degree of $\gamma(v)$ in $I'$. We note that in this case, if we construct the (disconnected) instance $I\cup I'$ containing all the variables and constraints in $I$ and $I'$, then $v\sim_{\delta} \gamma(v)$ for every node $v\in G_{I}$. Hence the result below follows.

\begin{corollary} \label{cor:sameDegSeq->sameSol}
Let $I,I' \in \dcspgm$ have the same iterated degree sequence. Then with both inputs any terminating decision algorithm will report the same decision.
\end{corollary}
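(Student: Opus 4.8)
The plan is to reduce the statement to Proposition~\ref{prop:idsi->sameSol} by bundling the two instances into a single disconnected one, exactly as the remark preceding the corollary suggests. Let $\gamma$ be the bijection between the nodes of $G_{I}$ and the nodes of $G_{I'}$ witnessing that $I$ and $I'$ have the same iterated degree sequence. First I would record that, since $\delta_{0}$ separates variables (value $\nwmn$) from constraints (value $\bltr$), the equality $\delta_{0}(v)=\delta_{0}(\gamma(v))$ forces $\gamma$ to map variables to variables and constraints to constraints; in particular $I$ and $I'$ have the same number $n$ of variables and the same number $m$ of constraints. Now form the disconnected instance $J=I\cup I'$. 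Because the $k^{\mathrm{th}}$ iterated degree of a node is computed recursively from its neighbours and hence depends only on its connected component, the degrees $\delta_{k}(v)$ evaluated in $J$ coincide with those evaluated in $I$, and likewise $\delta_{k}(\gamma(v))$ in $J$ coincides with the value in $I'$. Combined with the hypothesis $\delta_{k}(v)=\delta_{k}(\gamma(v))$ for all $k$, this yields $v\sim_{\delta}\gamma(v)$ in $J$ for every node $v$ of $G_{I}$.

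Next I would apply Proposition~\ref{prop:idsi->sameSol} to the single instance $J$: for every node $v$ of $G_{I}$, any terminating decision algorithm run on $J$ outputs the same decision at $\alpha(v)$ and at $\alpha(\gamma(v))$. It then remains to transfer this equality from the run on $J$ to the separate runs on $I$ and on $I'$. For this I would invoke the locality of message passing: since $I$ and $I'$ lie in distinct connected components of $J$, no message ever crosses between them, so the entire sequence of states of a process in the $I$-component during the run on $J$ is determined by $I$ alone, and symmetrically for the $I'$-component. Hence the decision at $\alpha(v)$ in the run on $J$ matches the decision at $\alpha(v)$ in the run on $I$, and the decision at $\alpha(\gamma(v))$ in $J$ matches the one in the run on $I'$. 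Chaining the three equalities shows that $\alpha(v)$ on input $I$ and $\alpha(\gamma(v))$ on input $I'$ report the same decision for every $v$; in particular, whenever the algorithm has all processes of a connected component agree, the two inputs receive the same overall decision, which is the claim.

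The step I expect to require the most care is this locality transfer, since each process also knows the global counts of variables and constraints, and these differ between $J$ (namely $2n$ and $2m$) and the standalone instances ($n$ and $m$). The clean way around this is to note that the inductive indistinguishability argument behind Proposition~\ref{prop:idsi->sameSol} only ever compares two nodes that are fed identical local knowledge and identical global counts: in $J$ both $v$ and $\gamma(v)$ see the counts $(2n,2m)$, whereas in the two standalone runs both $I$ and $I'$ are presented with the same counts $(n,m)$, using that $\gamma$ preserves the variable/constraint split. One can therefore either replay the symmetry argument directly across the pair $(I,I')$, whose processes share the common parameters $(n,m)$, or adopt the convention that in a disconnected instance the counts are taken per component, in which case the projection of the $J$-run onto each component is literally the corresponding standalone run. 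Either route closes the gap, and everything else is the routine bookkeeping carried out above.
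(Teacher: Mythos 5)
Your proof is correct and takes essentially the same route as the paper: form the disconnected instance $I\cup I'$, observe $v\sim_{\delta}\gamma(v)$, and invoke Proposition~\ref{prop:idsi->sameSol}. The locality/global-count subtlety you flag is real but glossed over in the paper's one-line justification, and your fix (running the indistinguishability induction directly across the pair of standalone runs, which share the parameters $(n,m)$) closes it.
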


\paragraph*{Polymorphisms.} Let $R$ be a $k$-ary relation over a finite domain $D$. An $\ell$-ary \textit{polymorphism} of $R$ is an operation $f: D^{\ell} \to D$ such that the coordinate-wise application of $f$ to any set of $\ell$ tuples from $R$ gives a tuple in $R$. More precisely, for any $\mb{t}_{1}, \ldots, \mb{t}_{\ell} \in R$, we have that $(f(\mb{t}_{1}[1],\ldots,\mb{t}_{\ell}[1]), \ldots, f(\mb{t}_{1}[k],\ldots,\mb{t}_{\ell}[k])) \in R$.
We say that a function $f$ is a polymorphism of a constraint language $\gm$ if $f$ is a polymorphism of all relations $R \in \gm$. Equivalently, we say that $\gm$ is \textit{invariant} under $f$. The set of polymorphisms of a constraint language $\gm$ will be denoted by $\pol(\gm)$. There is a particular construction of a CSP instance that is closely related to the clone of polymorphisms of the corresponding constraint language. Let $\Gamma$ be a constraint language over a finite domain $D$. For any positive integer $r$, the \textit{indicator problem of order $r$} for $\Gamma$ is the instance
$I = (X,D,C) \in \cspgm$ where $X = D^r$ and $C$ contains for every relation $R\in \gm$ and for every $\mb{t}_1,\dots,\mb{t}_r\in R$, the constraint $(\mb{s},R)$ where $\mb{s}[i]=(\mb{t}_1[i],\dots,\mb{t}_r[i])$ for every $i\in\{1,\dots,arity(R)\}$. It follows easily that for every $\nu:D^r\rightarrow D$, 
$\nu$ satisfies $I$ if and only if $\nu$ is a polymorphism of $\Gamma$.

An $\ell$-ary operation $f$ is said to be \textit{symmetric} if for all $x_{1}, \ldots, x_{\ell}$ and for all permutations $\sigma$ of $\{1,\dots,\ell\}$ we have that $f(x_{1}, \ldots, x_{\ell})=f(x_{\sigma(1)}, \ldots, x_{\sigma(\ell)})$.

\begin{example}
Consider the Boolean relation $R=\{(0,1),(1,0)\}$. 
It is easy to see that the ternary minority operation $f$ given by $f(x,y,z) = x \oplus y \oplus z $ is a polymorphism of $R$. On the other hand, one can show that $R$ does not have symmetric polymorphisms of arity 2. In particular, let $\textbf{t}_1=(0,1)$ and $\textbf{t}_2=(1,0)$. Since a symmetric binary operation $f$ needs to satisfy $f(0,1)=f(1,0)$, the coordinate-wise application of $f$ to $\textbf{t}_1, \textbf{t}_2$ would yield a reflexive tuple, which cannot possibly belong to $R$.\lipicsEnd
\end{example}
Our work unveils a novel structure in the space of solutions of a CSP instance that is deeply connected to the symmetry of its polymorphisms. In particular, $\pol(\gm)$ containing symmetric polymorphisms of all arities is equivalent to the existence of a satisfying assignment to every satisfiable instance of $\cspgm$ that preserves the partition induced by $\sim_{\delta}$. This is the main result of the next section.

\section{Basic Linear Programming relaxation} \label{sec:BLP}

For any CSP instance $I = (X,D,C)$ there is a LP relaxation (usually called \emph{basic LP relaxation}, see for example \cite{Kun2012}) denoted $\BLP(I)$, which is defined as follows. It has a variable $v(x,d)$ for each $x \in X$ and $d \in D$, and a variable $v(c,\mb{t})$ for each $c\in C$ and $\mb{t}\in R$ where $R$ is the constraint relation of $c$. All variables must take values in the range $[0,1]$. The value of $v(x,d)$ is interpreted as the probability that $v$ is assigned to $d$. Similarly, the value of $v(c,\mb{t})$ is interpreted as the probability that the scope of $c$ is assigned component-wise to the tuple $\mb{t}$. In this paper we only deal with a feasibility problem (that is, there is no objective function). The variables are restricted by the following equations:
\begin{align}
  \sum_{d \in D}v(x,d) &= 1 \quad \textnormal{ for all }x \in X \label{eq:LP2}\\
  \sum_{\substack{\mb{t}\in R_c\\ \mb{t}[i]=d}} v(c,\mb{t}) &- v(\mb{s}_c[i],d) = 0\quad \textnormal{ for all }c \in C \textnormal{, all }i\in \{1,\dots,arity(c)\} \textnormal{, and all } d\in D \label{eq:LP3}
\end{align}
\noindent
where we denote the relation and scope of a constraint $c$ by $R_{c}$ and $\mb{s}_{c}$ respectively. We say that $\BLP$ decides $\cspgm$ if for every instance $I \in \cspgm$, $I$ is satisfiable whenever $\BLP(I)$ is feasible. We will use the following well-known result, which for the reader's convenience we prove in the Appendix.

\begin{restatable}[see \cite{Kun2012}]{theorem}{thmrounding} \label{thm:rounding} 
If $\gm$ has symmetric polymorphisms of all arities, then BLP decides $\cspgm$. Moreover, if $I \in \cspgm$ is satisfiable then it has a solution $\nu$ such that for all $x,x'$ with $v(x,d) = v(x',d)$ for all $d \in D$, we have $\nu(x)=\nu(x')$.
\end{restatable}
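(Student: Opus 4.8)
The plan is to prove the non-trivial direction, namely that feasibility of $\BLP(I)$ implies satisfiability of $I$; this is exactly what the paper's definition of ``$\BLP$ decides $\cspgm$'' requires, and the converse is immediate since any satisfying assignment $\nu$ yields an integral feasible point by setting $v(x,\nu(x))=1$ and $v(c,\nu(\mb{s}_c))=1$. So suppose $\BLP(I)$ is feasible. First I would note that all coefficients appearing in \eqref{eq:LP2} and \eqref{eq:LP3} are integers, so the feasible region is a rational polytope and therefore contains a rational point. Clearing denominators, I fix a common denominator $N$ such that $N\,v(x,d)$ and $N\,v(c,\mb{t})$ are nonnegative integers for every variable marginal and every edge marginal of this fixed rational solution.

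Next, using the hypothesis I would select a symmetric polymorphism $f\in\pol(\gm)$ of arity exactly $N$. For each $x\in X$ I define a tuple $\mb{a}_x\in D^{N}$ in which each value $d\in D$ occurs exactly $N\,v(x,d)$ times; this is well defined because $\sum_{d\in D}N\,v(x,d)=N$ by \eqref{eq:LP2}. I then set $\nu(x)=f(\mb{a}_x)$. Since $f$ is symmetric, $f(\mb{a}_x)$ depends only on the multiset of entries of $\mb{a}_x$, that is, only on the numbers $v(x,d)$, so $\nu$ is well defined regardless of the ordering chosen for $\mb{a}_x$.

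The core of the argument is to check that $\nu$ satisfies every constraint $c=(\mb{s},R)$. I would list the tuples of $R$ with multiplicities, writing $\mb{t}_1,\dots,\mb{t}_N\in R$ in which each $\mb{t}\in R$ appears exactly $N\,v(c,\mb{t})$ times (a total of $N$ tuples, by summing \eqref{eq:LP3} over $d$ and using \eqref{eq:LP2}). Because $f$ is a polymorphism of $R$, applying $f$ coordinate-wise to $\mb{t}_1,\dots,\mb{t}_N$ produces a tuple of $R$. The decisive bookkeeping links this to $\nu$: fixing a coordinate $i$, equation \eqref{eq:LP3} gives $\sum_{\mb{t}[i]=d}v(c,\mb{t})=v(\mb{s}[i],d)$, so after multiplying by $N$ the number of indices $j$ with $\mb{t}_j[i]=d$ equals $N\,v(\mb{s}[i],d)$. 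Thus the multiset $\{\mb{t}_1[i],\dots,\mb{t}_N[i]\}$ coincides with the multiset of entries of $\mb{a}_{\mb{s}[i]}$, and by symmetry of $f$ the $i$-th coordinate of the rounded tuple equals $f(\mb{a}_{\mb{s}[i]})=\nu(\mb{s}[i])$. Hence the rounded tuple is exactly $\nu(\mb{s})$, so $\nu(\mb{s})\in R$ and $\nu$ is a satisfying assignment.

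Finally, the ``moreover'' clause falls out of the same construction relative to the fixed feasible solution: if $v(x,d)=v(x',d)$ for all $d\in D$, then $\mb{a}_x$ and $\mb{a}_{x'}$ have the same multiset of entries, whence $\nu(x)=f(\mb{a}_x)=f(\mb{a}_{x'})=\nu(x')$ by symmetry. I expect the only genuine subtlety to be the matching-multiset step: getting \eqref{eq:LP3} to certify that the column-wise multisets of the chosen tuple list agree exactly with the variable marginals is what allows the independently rounded constraints to glue into one globally consistent assignment, and it is precisely the symmetry of $f$, which lets us discard the order of the columns, that makes this gluing possible.
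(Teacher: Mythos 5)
Your proposal is correct and follows essentially the same route as the paper's own proof: take a rational feasible point with common denominator $N$, apply a symmetric polymorphism of arity $N$ to the multiset of values prescribed by the variable marginals, and use equation \eqref{eq:LP3} to match the column-wise multisets of the constraint tuples with the variable marginals so that coordinate-wise application of $f$ lands on $\nu(\mb{s})$. The only additions are minor: you justify the existence of a rational feasible point and spell out the ``moreover'' clause, both of which the paper leaves implicit.
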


The following theorem reveals a useful structure inside the solutions of the BLP.

\begin{restatable}{theorem}{thLP}
\label{th:LP}
Let $I=(X,D,C)$ be an instance of $\csp(\gm)$ such that $\BLP(I)$ is feasible. Then, $\BLP(I)$ has a feasible solution such that for every $x,x'\in X$ with $x\sim_{\delta} x'$ and every $d\in D$, $v(x,d)=v(x',d)$.
\end{restatable}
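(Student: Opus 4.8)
\emph{The plan} is to produce the required solution by \emph{symmetrising} an arbitrary feasible solution $v$ of $\BLP(I)$ over the classes of $\sim_{\delta}$. Writing $[u]$ for the $\sim_{\delta}$-class of a node $u$, I would define a new assignment $\bar v$ by averaging within classes:
\[
\bar v(x,d) = \frac{1}{|[x]|}\sum_{y\in [x]} v(y,d),\qquad \bar v(c,\mb{t}) = \frac{1}{|[c]|}\sum_{c'\in [c]} v(c',\mb{t}).
\]
For the constraint average to make sense I first observe that all constraints in one $\sim_{\delta}$-class carry the same relation $R$: the relation is recorded inside every label $\ell_{c,w}=(S,R)$ incident to $c$, so it is already determined by $\delta_{1}(c)$. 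Since $\bar v$ is a coordinate-wise convex combination of feasible points, its values lie in $[0,1]$, and it is constant on $\sim_{\delta}$-classes by construction; the only thing left to check is that $\bar v$ is feasible.

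Feasibility of \eqref{eq:LP2} is immediate, as $\sum_{d}\bar v(x,d)$ is an average of the quantities $\sum_{d} v(y,d)=1$. The substance lies in \eqref{eq:LP3}, and here I would isolate two structural consequences of the stability of the partition induced by $\sim_{\delta}$ (which reaches a fixed point by Proposition \ref{prop:2nisEnough}). The first is a \emph{position-matching} property: if $c\sim_{\delta} c'$ then $\mb{s}_c[i]\sim_{\delta} \mb{s}_{c'}[i]$ for every position $i$. This holds because, within a single constraint, distinct variables occupy disjoint position sets and hence receive distinct labels; consequently the multiset identity $\delta_{k+1}(c)=\delta_{k+1}(c')$ forces the unique variable of $c$ with position set $S\ni i$ to be matched to the unique variable of $c'$ with the same position set, namely $\mb{s}_{c'}[i]$, and this matching preserves $\delta_k$ for every $k$.

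The second fact, which I expect to be the crux, is a \emph{balance} property: for a variable class $B$, a constraint class $E$, and a label $\lambda$, every $y\in B$ is incident to the same number $N$ of constraints of $E$ through a $\lambda$-labelled edge. This is exactly the assertion that the $\sim_{\delta}$-partition is equitable, and it can be read off from $\delta_{k+1}(y)$ for $k$ past the stabilisation point, since $\delta_{k+1}(y)$ records, for each label and each $\delta_k$-class of neighbours, precisely this count. The reason I view it as the main obstacle is that it is what guarantees that the \emph{weighted} average $\frac{1}{|E|}\sum_{c'\in E} v(\mb{s}_{c'}[i],d)$ appearing on the left of \eqref{eq:LP3} collapses to the \emph{uniform} average $\bar v(x,d)$ over $B$, rather than to some other convex combination.

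Combining the two facts completes the verification. Fix a constraint $c$ with relation $R$, a position $i$, a value $d$, and set $x=\mb{s}_c[i]$, $B=[x]$, $E=[c]$, and $\lambda=(S_x,R)$ with $S_x=\{j:\mb{s}_c[j]=x\}$. Using that each $c'\in E$ satisfies \eqref{eq:LP3}, then position-matching (so $\mb{s}_{c'}[i]\in B$ and the edge $\{\mb{s}_{c'}[i],c'\}$ carries label $\lambda$), and finally balance (so each $y\in B$ is hit exactly $N$ times and hence $|E|=N\,|B|$), I would compute
\[
\sum_{\substack{\mb{t}\in R\\ \mb{t}[i]=d}}\bar v(c,\mb{t})=\frac{1}{|E|}\sum_{c'\in E} v(\mb{s}_{c'}[i],d)=\frac{N}{|E|}\sum_{y\in B} v(y,d)=\frac{1}{|B|}\sum_{y\in B} v(y,d)=\bar v(x,d),
\]
which is exactly \eqref{eq:LP3} for $\bar v$. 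Thus $\bar v$ is a feasible solution of $\BLP(I)$ that is constant on $\sim_{\delta}$-classes, as required.
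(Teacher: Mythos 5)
Your proof is correct, but it takes a genuinely different route from the paper's. The paper obtains the symmetric feasible point by running the Multiplicative Weight Update algorithm with a particular oracle (the one that sets $\mb{v}[v]=1$ exactly when $\mb{p}^TB[v]>0$) and showing by induction that every iterate $\mb{v}^{(t)}$, $\mb{w}^{(t)}$, $\mb{p}^{(t)}$ is constant on the equivalence classes, so that the limit solution is as well. You instead take an arbitrary feasible point and symmetrise it directly by averaging over $\sim_{\delta}$-classes, then verify feasibility of the averaged point by hand. The two ingredients you isolate are exactly right and both hold: position-matching follows because distinct variables in a scope have disjoint position sets and hence distinct edge labels, so the multiset equality $\delta_{k+1}(c)=\delta_{k+1}(c')$ forces a label-respecting, $\delta_k$-preserving bijection of neighbours; and the balance property is precisely the statement that the stable partition is equitable with respect to (label, class) pairs, which is read off from $\delta_{k+1}$ past the stabilisation point of Proposition \ref{prop:2nisEnough}. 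The double-counting identity $|E|=N|B|$ then makes the weighted average collapse to the uniform one. What your approach buys is a shorter, entirely finite argument with no approximation or limiting step, which moreover makes the combinatorial content (equitability of the $\sim_{\delta}$-partition) explicit; what the paper's approach buys is that the class-preservation is verified only for the single linear-algebraic fact that $B$ and $B^T$ map class-constant vectors to class-constant vectors (the Claim in the appendix), which is morally the same symmetry you use, packaged so that no explicit averaging identity needs to be checked. Your argument would be a legitimate, arguably simpler, replacement for the paper's proof.
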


\begin{proof}[Proof (Sketch).]
We start by rewriting the program in the form
\begin{equation} \label{eq:feasibProblem}
    \exists? \mb{v} \in [0,1]^{V} \quad B\mb{v} \geq \mb{b}
\end{equation}
by replacing every equality $a=b$ by the inequalities $a \geq b$ and $-a \geq -b$.

Let us use $W$ and $V$ to denote the rows and columns of $B$ respectively. The main idea of the proof is to apply the Multiplicative Weight Update (MWU) algorithm, a well-known technique that is widely used in optimisation and machine learning. MWU was discovered independently by researchers of different communities; for a survey of its different variants we refer the reader to \cite{Arora2012}. The version that is relevant to our paper is described in Algorithm \ref{alg:MWU}. Assuming that a feasible solution to (\ref{eq:feasibProblem}) does exist, the algorithm only requires the existence of an oracle which, given a probability $W$-vector $\mb{p}$
{(i.e, a non-negative vector $\mb{p}$ such that the sum of all its entries is $1$)}, outputs a vector $\mb{v}$ which is a solution to the weaker problem
\begin{equation} \label{eq:probFeasibProblem}
    \exists? \mb{v} \in [0,1]^{V} \quad \mb{p}^{T}B\mb{v} \geq \mb{p}^{T}\mb{b}
\end{equation}
\noindent
if one exists, or correctly states that no such vectors exist otherwise.

\begin{algorithm}[t]
\SetAlgoLined
\textbf{Initialisation:} Fix $\eta \leq \frac{1}{2}$, let $\rho = \max_{[0,1]^{V}} \max_{w \in W} |B_{w} \mb{v} -\mb{b}[w]|$, and let $\mb{w}^{(1)}$ be a $W$-vector, whose entries, called weights, are initially set to $1$.\\
\For{$t = 1, \ldots, T$ }{
Compute the probability vector $\mb{p}^{(t)} = \frac{1}{\Phi(t)} \mb{w}^{(t)}$, where $\Phi(t) = \sum_{j=1}^{|W|} \mb{w}^{(t)}[j]$\\
Let $\mb{v}^{(t)}$ be a solution satisfying $(\mb{p}^{(t)})^{T}B\mb{v}^{(t)} \geq (\mb{p}^{(t)})^{T}\mb{b}$ given by oracle $\oo$\\
 Compute the losses $\pmb{\ell}^{(t)} = \frac{1}{\rho} (B \mb{v}^{(t)}- \mb{b})$ \\
Compute the new weights $\mb{w}^{(t+1)} = \mb{w}^{(t)}(1-\eta \pmb{\ell}^{(t)}$) }
\KwRet $\mb{v}:= \frac{1}{T} \sum_{t=1}^{T} \mb{v}^{(t)}$\\
 \caption{Multiplicative Weight Update}\label{alg:MWU}
\end{algorithm}

Under some technical conditions that provide an upper bound on the number of rounds $T$ necessary to achieve a given approximation (see Theorem \ref{thm:oracle->MWU} in the Appendix) it follows that when $T\rightarrow\infty$ MWU converges to a solution of $\BLP(I)$. Now consider an oracle $\oo$ that, given a $W$-vector $\mb{p}$, returns the $V$-vector $\mb{v}$ where for every $v\in V$, $\mb{v}[v]=1$ if $\mb{p}^T B[v]$ is positive and $\mb{v}[v]=0$ otherwise. Since $\mb{v}$ maximizes $\mb{p}^T B\mb{v}$ under the restriction $\mb{v}\in[0,1]^V$ it follows that 
$\mb{v}$ satisfies (\ref{eq:probFeasibProblem}).

We note that $\sim_{\delta}$ induces an equivalence relation on the variables of $\BLP(I)$ (namely, $v(x,d)$ is equivalent to $v(x',d')$ whenever $x\sim_{\delta} x'$ and $d=d'$) which can be extended to an equivalence relation $\sim_V$ on the set $V$ of columns in $B$. Similarly, $\sim_{\delta}$ induces an equivalence relation $\sim_W$ on the rows $W$ of $B$ in a natural way. Then our goal is to show that the positions of $\sim_V$-equivalent entries in the output $\mb{v}:= \frac{1}{T} \sum_{t=1}^{T} \mb{v}^{(t)}$ are identical. This is done by showing by induction the more general fact that at each iteration $t$ of the algorithm, the positions of all $\sim_V$-equivalent entries in $\mb{v}^{(t)}$ are identical, and that for each of the $W$-vectors ($\mb{w}^{(t)}$, $\mb{p}^{(t)}$, and $\pmb{\ell}^{(t)}$) the positions of all $\sim_W$-equivalent entries are identical as well.
\end{proof}
We finalize the section by presenting the theorem on the structure of the solution space of CSP instances. 
\begin{theorem} \label{thm:sympol=iso=idsi}
Let $\gm$ be a finite constraint language. The following are equivalent:
\begin{enumerate}
    \item $\gm$ has symmetric polymorphisms of all arities.
    \item For all satisfiable instances $I=(X,D,C) \in \cspgm$ there exists a satisfying assignment $\nu: X \to D$ such that for all pairs of variables $x,x' \in X$, if $x \sim_{\delta} x'$ then $\nu(x)=\nu(x')$.
\end{enumerate}
\end{theorem}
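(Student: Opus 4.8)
The plan is to prove the two implications separately, drawing on the linear-programming machinery of this section for $(1)\Rightarrow(2)$ and on the indicator problem construction for the converse.

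For $(1)\Rightarrow(2)$, suppose $\gm$ has symmetric polymorphisms of all arities and let $I=(X,D,C)$ be a satisfiable instance of $\cspgm$. First I would invoke Theorem \ref{thm:rounding} to conclude that $\BLP(I)$ is feasible: since BLP decides $\cspgm$ and $I$ is satisfiable, the relaxation cannot be infeasible. With feasibility in hand, Theorem \ref{th:LP} supplies a feasible solution $v$ of $\BLP(I)$ that is \emph{symmetric} in the required sense, namely $v(x,d)=v(x',d)$ for all $d\in D$ whenever $x\sim_{\delta} x'$. Finally I would apply the rounding guarantee in the second part of Theorem \ref{thm:rounding} to this particular solution $v$, obtaining a satisfying assignment $\nu$ with the property that $\nu(x)=\nu(x')$ whenever $v(x,\cdot)=v(x',\cdot)$. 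Composing the two implications, $x\sim_{\delta} x'$ forces $v(x,\cdot)=v(x',\cdot)$ and hence $\nu(x)=\nu(x')$, which is exactly condition (2).

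For $(2)\Rightarrow(1)$, I would fix an arbitrary arity $r$ and work with the indicator problem $I_r$ of order $r$ for $\gm$, an instance of $\cspgm$ whose variable set is $D^r$ and whose satisfying assignments are precisely the $r$-ary polymorphisms of $\gm$. This instance is always satisfiable, since every coordinate projection $\pi_j(a_1,\dots,a_r)=a_j$ is a polymorphism of $\gm$. The key structural observation is that each permutation $\sigma$ of $\{1,\dots,r\}$ acts on the variables by permuting coordinates, $\mb{a}\mapsto \sigma(\mb{a})$, and that this action is an automorphism of the labelled factor graph $G_{I_r}$: permuting the list $\mb{t}_1,\dots,\mb{t}_r$ of tuples used to build a constraint yields another valid constraint of $I_r$ (membership in $R$ is insensitive to the order of the tuples) whose scope is the $\sigma$-image of the original scope, and the labels are preserved. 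Since iterated degree is invariant under label-preserving automorphisms, I obtain $\mb{a}\sim_{\delta}\sigma(\mb{a})$ for every $\mb{a}\in D^r$ and every $\sigma$. Applying condition (2) to $I_r$ then yields a satisfying assignment $\nu$, that is, an $r$-ary polymorphism, which is constant on each coordinate-permutation orbit; this is precisely a symmetric polymorphism of arity $r$. As $r$ was arbitrary, this establishes (1).

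I expect the main obstacle to lie in the backward direction, specifically in verifying cleanly that coordinate permutations are automorphisms of $G_{I_r}$ and therefore preserve $\sim_{\delta}$. One has to check that the map $\mb{a}\mapsto\sigma(\mb{a})$ on variables extends to a bijection on constraints that respects both the incidence relation $x\in c$ and the edge labels $\ell_{x,c}=(S,R)$, and then appeal to the standard fact (implicit in the colour-refinement characterisation of $\sim_{\delta}$) that such an automorphism cannot separate the iterated degrees of a vertex and its image. The forward direction is, by contrast, essentially a bookkeeping composition of Theorems \ref{thm:rounding} and \ref{th:LP}; the only delicate point there is to feed the \emph{specific} symmetric LP solution provided by Theorem \ref{th:LP} into the rounding step, rather than an arbitrary feasible solution.
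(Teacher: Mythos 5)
Your proposal is correct and follows essentially the same route as the paper: the forward direction composes Theorem \ref{th:LP} with the rounding guarantee of Theorem \ref{thm:rounding}, and the backward direction applies condition (2) to the indicator problem of order $r$, using the fact that coordinate permutations preserve $\sim_{\delta}$ (which the paper establishes by a direct induction on $k$ and you phrase, equivalently, as invariance of iterated degree under a label-preserving automorphism of the factor graph). The only nitpick is that feasibility of $\BLP(I)$ for satisfiable $I$ follows trivially from the integral solution induced by any satisfying assignment, not from the ``BLP decides $\cspgm$'' direction of Theorem \ref{thm:rounding}, but this does not affect the argument.
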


\begin{proof}
$(1)\Rightarrow (2)$. Let $I$ be a satisfiable instance of $\cspgm$, where $\gm$ has symmetric polymorphisms of all arities. Consider the solution of $\BLP(I)$ given by Theorem \ref{th:LP} and note that it satisfies $v(x,d)=v(x',d)$ for all $x \sim_{\delta} x'$ and all $d \in D$. Then, by Theorem \ref{thm:rounding}, $I$ has a solution $\nu$ which satisfies $\nu(x)=\nu(x')$ for all $x \sim_{\delta} x'$.

$(2)\Rightarrow (1)$. Let $\gm$ satisfy $(2)$ and let $r \geq 1$. We shall prove that $\gm$ has a symmetric polymorphism of arity $r$. Let $I= (X,D,C)$ be the indicator problem of order $r$. Recall that every solution to $I$ corresponds to an $r$-ary polymorphism of $\gm$, and hence the indicator problem is always satisfiable since for instance the projection to the first coordinate is a polymorphism of $\gm$. Let $\nu$ be a solution of the indicator problem which satisfies condition (2). 
It is easy to show by induction that for every tuple $(t_1.\dots,t_r)\in D^r$, every permutation $\sigma$ of $\{1,\dots,r\}$ and every 
$k\geq 0$, $(t_1,\dots,t_r)\sim_{\sigma}^k (t_{\sigma(1)},\dots,t_{\sigma(r)})$ which implies that $\nu(t_1,\dots,t_r)=\nu(t_{\sigma(1)},\dots,t_{\sigma(r)})$.
We conclude that $\nu$ is symmetric as required.
\end{proof}

\section{The Complexity of DCSP} \label{sec:complexity}
The primary goal of this section is to prove the main theorem of this paper, namely, the dichotomy theorem for tractability of $\dcspgm$, which we now state.

\begin{theorem} \label{thm:maintheorem-dec}
$\dcspgm$ is solvable in polynomial time if and only if $\pol(\gm)$ contains symmetric polymorphisms of all arities. Otherwise, $\dcspgm$ cannot be solved in finite time.
\end{theorem}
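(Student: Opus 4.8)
The plan is to prove the two directions of the dichotomy separately, relying on the tools already assembled in the excerpt. The theorem has three assertions packed together: (i) if $\gm$ has symmetric polymorphisms of all arities then $\dcspgm$ is solvable in polynomial time; (ii) if it does not, then $\dcspgm$ cannot be solved in finite time; and (iii) the negative case subsumes the intractable one, so there is no intermediate regime. Since the problem is a distributed decision problem on an anonymous synchronous network, the key leverage comes from the iterated-degree machinery of Section~\ref{sec:prelims} (Proposition~\ref{prop:idsi->sameSol} and Corollary~\ref{cor:sameDegSeq->sameSol}) together with the structure theorem (Theorem~\ref{thm:sympol=iso=idsi}) and the BLP results.

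For the positive direction (i), I would design an explicit message-passing algorithm. The idea is that each process computes, round by round, its iterated degree $\delta_k(v)$ by aggregating the labelled multisets of its neighbours' degrees; by Proposition~\ref{prop:2nisEnough} this stabilises after at most $2n$ rounds, so every process learns the partition induced by $\sim_\delta$ (encoded canonically, since the network is anonymous). The crucial payoff is Theorem~\ref{thm:sympol=iso=idsi}: because $\gm$ has symmetric polymorphisms of all arities, a satisfiable instance admits a solution that is constant on each $\sim_\delta$-class. Hence it suffices to search for an assignment to the (polynomially many) $\sim_\delta$-classes rather than to the individual variables, which collapses the anonymity obstruction. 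I would then run the adapted $jpq$-consistency algorithm of~\cite{kozik2020solving} over the class structure, as foreshadowed in the introduction, arguing that it is sound and complete precisely on templates with symmetric polymorphisms of all arities, and that it requires only polynomially many rounds and polynomial-size messages, giving polynomial time complexity.

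For the negative direction (ii), I would argue by contradiction. Suppose $\gm$ lacks a symmetric polymorphism of some arity $r$; then by Theorem~\ref{thm:sympol=iso=idsi} there exists a satisfiable instance $I$ with no satisfying assignment that is constant on $\sim_\delta$-classes. The standard strategy (anticipated in the introduction) is to build, from this witness, a pair of instances---one satisfiable, one unsatisfiable---that share the same iterated degree sequence; Corollary~\ref{cor:sameDegSeq->sameSol} then forces any terminating decision algorithm to return the same verdict on both, a contradiction, and this holds no matter how long the algorithm runs, yielding the ``not in finite time'' conclusion rather than merely intractability. The natural construction takes disjoint copies or tensor-like products of $I$ to symmetrise it while controlling satisfiability, so that local neighbourhoods become indistinguishable; the failure of a symmetric $r$-ary polymorphism is exactly what guarantees one member of the pair stays unsatisfiable.

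The main obstacle I expect is the negative direction: producing two instances that are genuinely iterated-degree-equivalent yet differ in satisfiability. Matching the full stabilised iterated degree sequence is delicate, since it must agree at \emph{every} level $k$ and on the edge labels, and one must simultaneously certify that the intended satisfiable instance really is satisfiable while the other provably is not. I anticipate leaning on the indicator-problem correspondence used in the proof of Theorem~\ref{thm:sympol=iso=idsi} to convert the absence of an $r$-ary symmetric polymorphism into a concrete combinatorial gadget, and then arguing that the two instances obtained by a suitable symmetrisation of the indicator problem are $\sim_\delta$-indistinguishable. Getting the bookkeeping of iterated degrees exactly right, rather than merely approximately, is where the real work lies.
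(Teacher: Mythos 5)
Your positive direction matches the paper's proof in both structure and substance: distributed colour refinement to compute $\delta_\infty$, Theorem~\ref{thm:sympol=iso=idsi} to justify treating the iterated degree as a surrogate identifier, and an adapted consistency algorithm in the style of \cite{kozik2020solving} (the paper additionally needs the safety invariant on the set system and the fact that symmetric polymorphisms of all arities imply bounded width, so that Theorem~\ref{the:kozik} applies, but these are details your outline is compatible with).

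The negative direction, however, has a genuine gap. You correctly identify the target --- two instances over the indicator relation $U$ with the same iterated degree sequence but different satisfiability, to which Corollary~\ref{cor:sameDegSeq->sameSol} applies --- but you do not produce them, and the two routes you gesture at do not lead there. First, starting from ``a satisfiable instance $I$ with no solution constant on $\sim_\delta$-classes'' (the contrapositive of Theorem~\ref{thm:sympol=iso=idsi}) gives you no mechanism for manufacturing an \emph{unsatisfiable} partner instance; the paper never takes this route. Second, ``disjoint copies or tensor-like products of $I$'' preserve satisfiability, so symmetrising a single instance this way cannot create the required satisfiable/unsatisfiable pair. The paper's actual construction is the substantive content of the hardness proof: it partitions the coordinates of $U$ (i.e., the tuples of $D^r$) into orbits under permutation of arguments, introduces $k_{[\mb{t}]_\equiv}n$ variables per orbit, and builds the scopes of the constraints from a balanced family of $k$-element subsets supplied by a dedicated combinatorial lemma (Lemma~\ref{le:technical}), closed under the induced coordinate permutations. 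Unsatisfiability of the first instance then follows from a pigeonhole argument: any solution must assign some $k_{[\mb{t}]_\equiv}$ variables of each orbit the same value, and the constraint indexed by those choices forces the restriction of the assignment to be a \emph{symmetric} $r$-ary polymorphism, which does not exist; satisfiability of the second instance is witnessed by any projection. Equality of the iterated degree sequences rests on the regularity guarantee of the lemma (every variable occurs in the same fraction of constraints, evenly distributed over the positions of its orbit). None of this bookkeeping is ``merely delicate'' --- it is the proof, and without it the hardness half of the theorem is unestablished.
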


We show hardness of constraint languages that do not have symmetric polymorphisms of all arities in Section \ref{sec:intractable} and tractability of the remaining languages in Section \ref{sec:tractable}. In addition, in Section \ref{sec:search} we extend the decision algorithm so that, additionally, it also provides a solution to the search problem. Hence we have:

\begin{theorem}
\label{thm:maintheorem-ser}
$\dser$ is solvable in polynomial time if and only if $\pol(\gm)$ contains symmetric polymorphisms of all arities. Otherwise, $\dser$ cannot be solved in finite time.
\end{theorem}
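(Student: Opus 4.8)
The plan is to obtain Theorem~\ref{thm:maintheorem-ser} by combining the decision dichotomy of Theorem~\ref{thm:maintheorem-dec} with an extension of the tractable-case algorithm to search. The negative direction is immediate: any algorithm that solves $\dser$ in finite time in particular reports, at every process, whether the input is satisfiable, and hence also solves the decision problem $\dcspgm$ in finite time. Therefore, if $\pol(\gm)$ lacks symmetric polymorphisms of some arity, then by Theorem~\ref{thm:maintheorem-dec} the decision problem cannot be solved in finite time, and consequently neither can the search problem. It remains to produce, when $\gm$ has symmetric polymorphisms of all arities, a polynomial-time distributed algorithm that not only decides satisfiability but also outputs a satisfying assignment.

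For the positive direction, the starting point is the structure theorem (Theorem~\ref{thm:sympol=iso=idsi}): every satisfiable instance of $\cspgm$ admits a satisfying assignment that is constant on each class of the partition induced by $\sim_{\delta}$. This is precisely the shape of solution that an anonymous algorithm is permitted to output, since by Proposition~\ref{prop:idsi->sameSol} any terminating search algorithm must in any case assign the same value to iterated-degree-equivalent variables. I would therefore search for such a canonical solution by a distributed self-reduction that commits one $\sim_{\delta}$-class at a time. After a first run of the decision algorithm confirms satisfiability, the processes agree on a canonical order of the (finitely many) iterated-degree classes, which is available because at the fixpoint of colour refinement every class is named by a canonical object, namely its iterated degree. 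Then, class by class and value by value, the processes tentatively fix all variables of the current class to a common value $d\in D$, re-run the decision procedure on the resulting instance, keep the value if the instance is still satisfiable, and otherwise move on to the next value. Because the structure theorem guarantees a solution constant on classes, some value always succeeds, and after at most $|D|$ attempts the current class is permanently fixed. Since there are at most $n$ variable classes and each decision call is polynomial, the procedure makes $O(n|D|)$ such calls and hence runs in polynomial time.

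The main obstacle is to make this self-reduction sound inside the anonymous, synchronous model, and two points require care. First, fixing a class to a value $d$ amounts to adding singleton unary constraints, which may take the instance outside $\cspgm$ and, worse, may destroy the symmetric polymorphisms on which the decision algorithm, and thus the structure theorem, rely; one must therefore argue that the consistency notion underlying the adapted $jpq$-consistency algorithm of \cite{kozik2020solving} still certifies solvability on these restricted instances, or equivalently that fixing a class to a value lying in the support of the canonical solution preserves both feasibility and the constant-on-classes property. Second, because no process has an identifier, the commitment must be taken simultaneously and identically by every process in a class; this is consistent with Proposition~\ref{prop:idsi->sameSol}, but it forces the self-reduction to operate at the granularity of classes rather than individual variables, and one must verify that fixing an entire class at once cannot turn a satisfiable instance unsatisfiable, which is exactly what the structure theorem secures since the sought solution is already constant on that class. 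Discharging these two points rigorously, together with the bookkeeping that lets the processes recompute the refined partition and agree on its canonical order after each class is fixed, is where the real work of Section~\ref{sec:search} lies.
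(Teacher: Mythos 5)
Your overall architecture matches the paper's: hardness by reduction from the decision problem, and, for tractable languages, a class-by-class, value-by-value self-reduction that repeatedly invokes the distributed consistency procedure, using the rank of $\delta_{\infty}(x)$ in a canonical ordering of the iterated-degree classes to coordinate the anonymous agents. (One of your worries is moot, though: in the paper the tentative commitments are fed to the consistency algorithm as the \emph{initial set system} $S$, with $S_x=\{d\}$ on the current class and $S_x=F_x$ elsewhere, rather than as new unary constraint nodes; the factor graph, and hence the $\sim_{\delta}$-partition and its ordering, never change, so no recomputation or re-agreement is needed.)

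The first point you flag, however, is a genuine gap rather than routine bookkeeping, and it is exactly where the one new idea of this proof lives. Soundness of the self-reduction requires that the consistency test remain a correct decision procedure after singleton unary restrictions are imposed, i.e.\ that the enlarged language still have symmetric polymorphisms of all arities. This fails for arbitrary singletons $\{d\}$, $d\in D$: a symmetric polymorphism need not satisfy $g(d,\dots,d)=d$, so the test could return a false positive and the algorithm would commit to a value extending to no solution. The paper's resolution is to take $J\subseteq D$ minimal with $J=f(D)$ for some unary $f\in\pol(\gm)$ and to show that for every $r$ there is an $r$-ary symmetric polymorphism fixing $J$ pointwise: given any $r$-ary symmetric $g\in\pol(\gm)$, the unary $h(x)=f(g(x,\dots,x))$ maps $D$ into $J$ and, by minimality of $J$, restricts to a bijection of $J$, so $h^{-1}\in\pol(\gm)$ and $h^{-1}\circ f\circ g$ is the desired operation. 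Hence $\gm'=\gm\cup\{\{d\}\mid d\in J\}\cup\{D\}$ retains symmetric polymorphisms of all arities, the inner loop must range only over $d\in J$ (which suffices, since composing any solution with $f$ yields one valued in $J$, and the structure theorem applied to $\gm'$ guarantees one that is constant on classes and falls within the current $F$). Without identifying $J$ and proving this closure lemma, your argument does not go through.
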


\subsection{Intractable Languages} \label{sec:intractable}

In this section we focus on intractable languages, that is, the hardness part of Theorem \ref{thm:maintheorem-dec}.

\begin{theorem} \label{thm:nopols->intractable}
Let $\gm$ be a constraint language on a finite domain $D$. If $\pol(\gm)$ does not contain symmetric operations of all arities, then there is no algorithm that solves $\dcspgm$ in finite time.
\end{theorem}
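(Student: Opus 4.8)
The plan is to reduce the impossibility of solving $\dcspgm$ to the impossibility of distinguishing, by any message-passing algorithm, two instances with the same iterated degree sequence — one satisfiable and the other not. By Corollary \ref{cor:sameDegSeq->sameSol}, if I can exhibit two instances $I, I' \in \dcspgm$ that have the same iterated degree sequence but such that $I$ is satisfiable and $I'$ is not, then no terminating decision algorithm can report the correct answer on both, and hence none solves $\dcspgm$. So the entire task reduces to constructing such a pair of "twin" instances under the hypothesis that $\pol(\gm)$ lacks a symmetric polymorphism of some arity.

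\smallskip

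The key device will be the indicator problem. By the contrapositive of Theorem \ref{thm:sympol=iso=idsi}, since $\gm$ has no symmetric polymorphism of some arity $r$, there is a satisfiable instance of $\cspgm$ — namely a witness to the failure of condition $(2)$ — for which \emph{every} satisfying assignment $\nu$ violates $\nu(x)=\nu(x')$ for some pair $x \sim_{\delta} x'$. Concretely I would take $I$ to be the indicator problem of order $r$, which is always satisfiable (the first projection is a polymorphism) and whose solutions are exactly the $r$-ary polymorphisms of $\gm$; the absence of a symmetric one means no solution of $I$ is constant on the $\sim_{\delta}$-classes. From this $I$ I would then build a second instance $I'$ by "merging" variables that are iterated degree equivalent: identify all variables within each $\sim_{\delta}$-class into a single variable (and correspondingly collapse constraints), forcing any solution of $I'$ to assign a single value per class. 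The point is that $I'$ is designed to be unsatisfiable precisely because a satisfying assignment of $I'$ would pull back to a $\sim_{\delta}$-constant solution of $I$, which does not exist.

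\smallskip

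The heart of the argument — and the step I expect to be the main obstacle — is verifying that $I$ and $I'$ have the \emph{same iterated degree sequence}, so that Corollary \ref{cor:sameDegSeq->sameSol} applies. Merging $\sim_{\delta}$-equivalent nodes must be done so that the local neighbourhood structure, as recorded by the labelled iterated degrees $\delta_k$, is preserved under the natural correspondence $\gamma$ between nodes of $G_I$ and $G_{I'}$. Because $\sim_{\delta}$ is by definition a fixed point of the colour-refinement process (it stabilises by some $k \le 2n$, Proposition \ref{prop:2nisEnough}), equivalent nodes already have matching multisets of labelled neighbour-degrees; the delicate part is to arrange the multiplicities in $I'$ so that each node still sees, as a multiset, the same labelled degrees it saw in $I$, i.e. that the collapse does not alter any $\delta_k$ value under $\gamma$. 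I would carry this out by taking $I'$ to be a suitable disjoint union of copies and quotients chosen so that the degree multisets match exactly, checking the invariance by induction on $k$ using the fixed-point property of $\sim_{\delta}$. Once the two instances are shown to share an iterated degree sequence, the satisfiability of $I$, the unsatisfiability of $I'$, and Corollary \ref{cor:sameDegSeq->sameSol} together yield that no finite-time algorithm can exist, completing the proof.
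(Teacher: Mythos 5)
Your high-level strategy is the same as the paper's: exhibit a satisfiable and an unsatisfiable instance with identical iterated degree sequences and invoke Corollary \ref{cor:sameDegSeq->sameSol}, using the indicator problem of order $r$ as the source of the asymmetry. The reduction of the theorem to that task, the satisfiability of the indicator problem, and the observation that a solution constant on $\sim_{\delta}$-classes would be a symmetric polymorphism are all correct. However, the step you yourself flag as the main obstacle --- producing the unsatisfiable twin with the \emph{same} iterated degree sequence --- is a genuine gap, and the device you propose (quotienting by $\sim_{\delta}$ and padding with disjoint copies) cannot be made to work. First, merging two permutation-related tuples collapses distinct positions of a constraint's scope onto a single variable, so an edge pair with labels $(\{i\},R)$ and $(\{j\},R)$ becomes a single edge labelled $(\{i,j\},R)$; already $\delta_1$ of the merged node differs from $\delta_1$ of its preimages (one can check this concretely on the order-$2$ indicator problem for $R=\{(0,1),(1,0)\}$), and duplication of components cannot repair a per-node degree mismatch. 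Second, even ignoring labels, the quotient has one node per class while the original has several, and a disjoint union of copies multiplies every class count uniformly, so the multisets of iterated degrees can only be balanced if all classes have the same size. Third, you cannot mix copies of the satisfiable instance into the unsatisfiable side to adjust the counts: any union containing the quotient as a component is unsatisfiable, while every component of the satisfiable side must be satisfiable, so there is no remaining freedom.

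The paper resolves exactly this difficulty by going in the opposite direction: instead of shrinking the indicator problem it blows it up. It first packages the indicator problem into a single relation $U$ of arity $|D|^r$ (the set of $r$-ary polymorphisms, efpp-definable from $\gm$, so that hardness of $\dcsp(\{U\})$ transfers to $\dcspgm$), and then builds two instances of $\dcsp(\{U\})$ with $k_{[\mb{t}]_{\equiv}}n$ fresh variables for each permutation class. Unsatisfiability is obtained not by identifying variables but by the pigeonhole principle: among the first $k_{[\mb{t}]_{\equiv}}\cdot|D|$ variables of a class, some $k_{[\mb{t}]_{\equiv}}$ must receive the same value, and a combinatorial lemma (Lemma \ref{le:technical}) supplies a degree-regular family of scopes guaranteeing that some constraint places exactly such an agreeing set on the coordinates of one class, which would force a symmetric polymorphism. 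The satisfiable twin uses a block partition of each class instead. Degree regularity of both families, together with closing the constraint sets under coordinate permutations, is what yields the matching iterated degree sequences. To complete your proof you would need to replace the quotient construction with a construction of this kind.
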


Schematically, the proof goes as follows. Assume that $\gm$ does not have symmetric polymorphisms of some arity $r$. Consider the relation $U$ defined by the set of solutions of the indicator problem of order $r$. 
It can be shown that if $\dcspgm$ is solvable in polynomial (or finite) time then so is $\dcsp(\{U\})$. Then, we show that there always exist two instances of $\dcsp(\{U\})$, one which is satisfiable and the other one which is not, that have the same iterated degree sequence. Therefore, any algorithm will return the same output on both instances, meaning that one of these outputs is wrong. Before embarking on the proof we state the following useful combinatorial lemma.

\begin{restatable}{lemma}{letechnical}
\label{le:technical}
Let $0<k<d$ be positive integers. 
If $n$ is a large enough multiple of $k$, then there exists a collection $\mathbb{S}$ of $n^{k}$ $k$-element subsets of $\{0,1,\dots,kn-1\}$ satisfying the following properties:
\begin{enumerate}
    \item[\textnormal{(a)}] $\mathbb{S}$ contains every $k$-element subset of $\{0,\dots,d-1\}$
    \item[\textnormal{(b)}] Every element of $\{0,1,\dots,kn-1\}$ appears in the same number of sets of $\mathbb{S}$.
\end{enumerate}
\end{restatable}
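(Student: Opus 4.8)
The plan is to construct the collection $\mathbb{S}$ explicitly, combining a fixed ``seed'' part that guarantees property (a) with a large symmetric part that restores the balance required by property (b). First I would set aside the small portion of the problem governed by (a): there are exactly $\binom{d}{k}$ $k$-element subsets of $\{0,\dots,d-1\}$, a quantity that is a constant depending only on $k$ and $d$ and, crucially, independent of $n$. I would place all of these $\binom{d}{k}$ sets into $\mathbb{S}$ as the seed. The difficulty is that these seed sets use only the small ground set $\{0,\dots,d-1\}$ and will in general be unbalanced across all of $\{0,1,\dots,kn-1\}$, so the remaining $n^{k}-\binom{d}{k}$ sets must be chosen to even out the element-occurrence counts while staying inside the prescribed cardinality.

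The natural tool to restore balance is symmetry. I would observe that the symmetric group $\mathrm{Sym}(\{0,\dots,kn-1\})$ acts transitively on the elements of the ground set, and that the set of \emph{all} $k$-element subsets is invariant under this action, hence perfectly balanced: each of the $kn$ elements lies in exactly $\binom{kn-1}{k-1}$ of the $\binom{kn}{k}$ total subsets. The key quantitative step is to count how many of each element we have used in the seed and to correct the deficits. Concretely, I would track for every element $e\in\{0,1,\dots,kn-1\}$ the number $a_e$ of seed sets containing $e$ (so $a_e=0$ for $e\ge d$), and then choose the remaining sets so that each element's total count equals a common target value $\lambda$. Summing over all elements gives the counting identity $\sum_{e} (\text{total count of } e) = k\cdot|\mathbb{S}| = k\cdot n^{k}$, which forces $\lambda = k n^{k}/(kn) = n^{k-1}$; thus the target occurrence of every element must be exactly $n^{k-1}$, and I must fill in the non-seed sets so that element $e$ appears in precisely $n^{k-1}-a_e$ of them.

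The main technical step, and the place where the hypothesis that $n$ be a \emph{large enough} multiple of $k$ is used, is showing such a balanced completion exists. I would argue this by a averaging/design-theoretic construction: take a balanced family on $\{0,1,\dots,kn-1\}$ — for instance a suitable union of orbits under a transitive subgroup such as the cyclic shift $e\mapsto e+1 \pmod{kn}$, each orbit of which contributes the same number of sets containing each element — and then make a small localized adjustment to absorb the seed deficits $a_e$. Because every $a_e$ is bounded by the constant $\binom{d}{k}$ while the available target count $n^{k-1}$ grows without bound, for $n$ large the deficits are negligible and the required number $n^{k}-\binom{d}{k}$ of additional sets far exceeds what is needed to patch up finitely many elements; the condition that $n$ be a multiple of $k$ ensures the cyclic orbits partition cleanly so that the baseline family is exactly regular. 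I expect the main obstacle to be verifying that the localized correction can be carried out without either repeating a set already used or violating the $k$-element cardinality constraint — essentially a feasibility check for an integer flow / degree-sequence realization problem. This I would handle either by a direct greedy/Hall-type argument or by invoking that a regular bipartite-type incidence structure with the prescribed margins is realizable once $n$ is large enough, which is exactly the content hidden in the phrase ``large enough multiple of $k$''.
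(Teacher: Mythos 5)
Your counting is right (the common occurrence number is forced to be $n^{k-1}$) and your instinct to exploit the cyclic shift $e\mapsto e+1 \pmod{kn}$ is exactly the right one, but the proposal leaves its central step unproven. Everything hinges on the claim that the $n^{k}-\binom{d}{k}$ non-seed sets can be chosen, distinct from each other and from the seeds, so that element $e$ appears in exactly $n^{k-1}-a_{e}$ of them; you defer this to ``a greedy/Hall-type argument'' or to a realizability theorem for incidence structures with prescribed margins. Realizing a prescribed degree sequence by a collection of \emph{distinct} $k$-element sets is not a routine black box (for $k\ge 3$ degree-sequence realizability of simple uniform hypergraphs is a genuinely delicate problem), and the ``small localized adjustment'' to a union of cyclic orbits is never specified. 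So, as written, the argument is incomplete at precisely the point where all the work lies.

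The paper sidesteps the patching problem entirely. Call a set $S$ \emph{bad} if $S=S+i \pmod{kn}$ for some $i\neq 0$ and \emph{good} otherwise; a good set has trivial stabilizer, so its cyclic orbit has exactly $kn$ members, and every full orbit is automatically balanced (each element of $\{0,\dots,kn-1\}$ lies in exactly $k$ sets of the orbit). Two short counting claims show (i) for $n>2(d-1)/k$ every subset of $\{0,\dots,d-1\}$ is good, and (ii) there are at least $n^{k}$ good $k$-sets, hence at least $n^{k}/(kn)=n^{k-1}/k$ full orbits, an integer because $k\mid n$. One then takes $\mathbb{S}$ to be a union of exactly $n^{k-1}/k$ full orbits, chosen to include the orbits of all $\binom{d}{k}$ seed sets. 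In other words, the balanced completion of your seed is simply the remainder of each seed's own orbit together with some further whole orbits; if you replace your unspecified patching step by this observation, your proof closes and essentially coincides with the paper's.
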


\begin{proof}[Proof of Theorem \ref{thm:nopols->intractable}]
Assume that $\pol(\gm)$ does not contain symmetric polymorphisms of arity $r$. Fix any arbitrary order $\mb{t}_{1}, \ldots, \mb{t}_{|D|^{r}}$ on the tuples of $D^r$ and consider the relation $U$ defined as 
\[\{(f(\mb{t}_{1}), \ldots, f(\mb{t}_{|D|^{r}})) \mid f \textnormal{ is a polymorphism of $\Gamma$ of arity $r$}\}.\]
This is, $U$ is the set of solutions of the indicator problem of order $r$. It follows easily (see Remark \ref{re:indicator} in Appendix) that if $\dcsp(\{U\})$ is not solvable in finite time then neither is $\dcspgm$. In particular, this follows from an adaptation of standard complexity reductions, given that $U$ is pp-definable from $\gm$ without using equality.

Partition $D^r$ into equivalence classes where two tuples $\mb{t},\mb{t}'\in D^r$ are related, denoted $\mb{t}\equiv\mb{t}'$, if there exists some permutation $\sigma$ on $\{1, \ldots, r\}$ such that $\mb{t}'[i]=\mb{t}[\sigma(i)]$ for every $i \in \{1, \ldots, r\}$. We shall use $D^r_{\equiv}$ to refer to the collection of classes and $[\mb{t}]_{\equiv}$ to refer to the class of tuple $\mb{t}$. For every $\mb{t} \in D^r$, define $k_{[\mb{t}]_{\equiv}}$ to be the number of tuples in $[\mb{t}]_{\equiv}$. Then we can choose an integer $n$ large enough such that
for every $\mb{t}\in D^r$, $n$ is a multiple of $k_{[\mb{t}]_{\equiv}}$, and $n$ satisfies 
Lemma \ref{le:technical} for $k=k_{[\mb{t}]_{\equiv}}$ and $d=k_{[\mb{t}]_{\equiv}} \cdot |D|$.

We are now ready to construct two instances $I_{1}$ and $I_{2}$ of $\dcsp(\{U\})$, which are indistinguishable with respect to their iterated degree sequence, but differ with regards to satisfiability. The two instances have the same set of variables, defined to be $\bigcup_{[\mb{t}]_{\equiv}\in D^r_{\equiv}} V_{[\mb{t}]_{\equiv}}$ where 
$V_{[\mb{t}]_{\equiv}}=\{v_{[\mb{t}]_{\equiv}}^i \mid 0\leq i< k_{[\mb{t}]_{\equiv}} n\}$ is a set of $k_{[\mb{t}]_{\equiv}}n$ distinct variables.

We start by constructing the constraints of the unsatisfiable instance $I_{1}$, which we will do in two stages. First, for every class $[\mb{t}]_{\equiv}$, let $\mathbb{S}_{[\mb{t}]_{\equiv}}$ be the collection of $n^{k_{[\mb{t}]_{\equiv}}}$ sets of cardinality $k_{[\mb{t}]_{\equiv}}$ given by Lemma \ref{le:technical}, as before with $d=k_{[\mb{t}]_{\equiv}}\cdot |D|$ and $k=k_{[\mb{t}]_{\equiv}}$. Note that each set in $\mathbb{S}_{[\mb{t}]_{\equiv}}$ defines naturally a subset of $V_{[\mb{t}]_{\equiv}}$ so we shall abuse notation and assume
that $\mathbb{S}_{[\mb{t}]_{\equiv}}$ is a collection of subsets of $V_{[\mb{t}]_{\equiv}}$.

To simplify notation it will be convenient to use $\mathbb{S}$ as a shorthand for the indexed family $\{\mathbb{S}_{[\mb{t}]_{\equiv}} \mid {[\mb{t}]_{\equiv}}\in D^r_{\equiv}\}$. Now let $S$ be $\{S_{[\mb{t}]_{\equiv}} \mid {[\mb{t}]_{\equiv}}\in D^r_{\equiv}\}$ satisfying
$S_{[\mb{t}]_{\equiv}}\in \mathbb{S}_{[\mb{t}]_{\equiv}}$ for every ${[\mb{t}]_{\equiv}}\in D^r_{\equiv}$. 
We associate to $S$ the constraint $( \mb{s},U )$ where the scope $\mb{s}$ is constructed as follows. Before defining $\mb{s}$ we need some preparation. Recall that every coordinate of $U$, and hence of $\mb{s}$, is associated to a tuple $\mb{t}\in D^r$, so we can talk of the class $[\mb{t}]_{\equiv}$ to which each coordinate belongs. In particular, there are $k_{[\mb{t}]_{\equiv}}$ coordinates in $\mb{s}$ of class $[\mb{t}]_{\equiv}$. Hence, by fixing some arbitrary ordering we can use $\mb{s}_{[\mb{t}]_{\equiv}}^i$, $i=1,\dots,k_{[\mb{t}]_{\equiv}}$ to refer to the coordinates in $\mb{s}$ of class $[\mb{t}]_{\equiv}$. Then, informally, $S_{[\mb{t}]_{\equiv}}$ describes which variables 
from $v^0_{[\mb{t}]_{\equiv}},\dots,v^{k_{[\mb{t}]_{\equiv}}n-1}_{[\mb{t}]_{\equiv}}$ to use in order to fill coordinates $\mb{s}_{[\mb{t}]_{\equiv}}^i$, $i=1,\dots,k_{[\mb{t}]_{\equiv}}$. Formally, 
for every $[\mb{t}]_{\equiv}\in D^r_{\equiv}$ and each $i=1,\dots,k_{[\mb{t}]_{\equiv}}$, $\mb{s}_{[\mb{t}]_{\equiv}}^i$ is assigned to the $i^{th}$ element in $S_{[\mb{t}]_{\equiv}}$ in increasing order.

We add such a constraint for each of the $\Pi_{[\mb{t}]_{\equiv} \in D^{r}_{\equiv}} n^{k_{[\mb{t}]_{\equiv}}} =  n^{(|D|^{r})}$ possible choices for $S$. Therefore, after the first stage we have exactly $n^{(|D|^{r})}$ constraints. 

In the second stage we add more constraints which will yield the particular symmetry of $I_{1}$. Note that every permutation $\sigma$ on $\{1,\dots,r\}$ induces a permutation $\sigma'$ on the coordinates of $U$ in a natural way. Specifically, if coordinate $i$ of $U$ is associated to tuple $\mb{t}_i$, then $\sigma'(i)=j$
where $\mb{t}_j=(\mb{t}_i[\sigma(1)],\dots,\mb{t}_i[\sigma(r)])$. Then, in the second stage, for each permutation $\sigma$ on $\{1,\ldots,r\}$ and for every constraint $( \mb{s},U )$ added in the first stage we add the constraint $( \mb{s}',U )$ where for every $1\leq i\leq |D|^r$, 
$\mb{s}'[i]=\mb{s}[\sigma'(i)]$. Therefore, after the second stage we have a total of $m=r!\cdot n^{(|D|^{r})}$ constraints as needed.

We now turn to $I_{2}$. The constraints are constructed in a similar way, but instead of using the family 
$\mathbb{S}$ in the first stage, we use a different family $\mathbb{S'}$.
In particular, for each class $[\mb{t}]_{\equiv}$, $\mathbb{S'}_{[\mb{t}]_{\equiv}}$ is obtained by partitioning $V_{[\mb{t}]_{\equiv}}$ in $k_{[\mb{t}]_{\equiv}}$ blocks of consecutive elements, so that each block has exactly $n$ elements. Then, $\mathbb{S'}_{[\mb{t}]_{\equiv}}$ contains the $n^{k_{[\mb{t}]_{\equiv}}}$ sets that can be obtained by selecting one element from each block. The second stage is done exactly as in $I_{1}$.

\begin{claim} \label{claimone}
$I_1$ and $I_2$ have the same iterated degree sequence.
\end{claim}

\begin{claimproof} Let $[\mb{t}]_{\equiv} \in D^{r}_{\equiv}$. First, we observe that in both instances after the first stage, every variable of $V_{[\mb{t}]_{\equiv}}$ appears in the same number of constraints. More specifically, every variable in $V_{[\mb{t}]_{\equiv}}$ appears in an $n$-fraction of the constraints added in stage $1$. In the case of instance $I_1$ this is due to the fact that $\mathbb{S}_{[\mb{t}]_{\equiv}}$ satisfies condition (b) in Lemma \ref{le:technical} and in instance $I_2$ this follows from the fact that $\mathbb{S}'_{[\mb{t}]_{\equiv}}$ contains all possible sets obtained by choosing an element within each one of the blocks of size $n$.
After the second stage (in both $I_1$ and $I_2$ since the second stage is common) every variable in $V_{[\mb{t}]_{\equiv}}$ still participates in an $n$-fraction of the total number of constraints. In addition, it follows easily that the positions of the scope in which a variable in $V_{[\mb{t}]_{\equiv}}$ participates distribute evenly among the $k_{[\mb{t}]_{\equiv}}$ positions associated to ${\mb{t}}$. That is, in both instances, we have that for every $[\mb{t}]_{\equiv}\in D^r_{\equiv}$, every variable $x\in V_{[\mb{t}]_{\equiv}}$, and every position $i$ associated to $[\mb{t}]_{\equiv}$ there are exactly $\frac{m}{n k_{[\mb{t}]_{\equiv}}}$ constraints in which $x$ appears at position $i$ of the scope, where $m = r! \cdot n^{|D|^r}$.
Using this fact it is very easy to prove that $I_1$ and $I_2$ have the same iterated degree sequence. Formally, one could show by induction on $k$ that for every $[\mb{t}]_{\equiv}\in D^r_{\equiv}$ and $x_1,x_2\in V_{[\mb{t}]_{\equiv}}$, $\delta_k^{I_1}(x_1)=\delta_k^{I_2}(x_2)$ and that for any two constraints $c_1,c_2$ in $I_1$ and $I_2$ respectively $\delta_k^{I_1}(c_1)=\delta_k^{I_2}(c_2)$. Here we are using $\delta_k^{I_1}(\cdot)$ and $\delta_k^{I_2}(\cdot)$ to denote the $k^{th}$ degree of a node in the factor graphs of $I_1$ and $I_2$ respectively.\end{claimproof}

\begin{claim} \label{claimtwo}
Instance $I_{1}$ is unsatisfiable while instance $I_{2}$ is satisfiable.\end{claim}

\begin{claimproof} We start by showing that $I_{1}$ is not satisfiable. Assume by contradiction that $I_{1}$ has a satisfying assignment $\nu$. For each class $[\mb{t}]_{\equiv}$, consider the values given by $\nu$ to the first $d$ variables $v_0,\dots,v_{d-1}$ in $V_{[\mb{t}]_{\equiv}}$. Since $d=k_{[\mb{t}]_{\equiv}} \cdot |D|$, it follows by the pigeon-hole principle that at least $k_{[\mb{t}]_{\equiv}}$ of these variables are assigned by $\nu$ to the same value of $D$. Let $S_{[\mb{t}]_{\equiv}}$ be a subset of $V_{[\mb{t}]_{\equiv}}$ containing $k_{[\mb{t}]_{\equiv}}$ of these variables (we know that this subset belongs to $\mathbb{S}_{[\mb{t}]_{\equiv}}$ by condition (a) of Lemma \ref{le:technical}). Now consider the constraint $( \mb{s},U )$ in $I_{1}$ associated to $S:=\{S_{[\mb{t}]_{\equiv}} \mid {[\mb{t}]_{\equiv}\in D^r_{\equiv}}\}$, which belongs to $I_{1}$. If $\nu$ is a solution to $I_{1}$, then the restriction of $\nu$ to $\mb{s}$ corresponds to an $r$-ary polymorphism of $\Gamma$. But $\nu$ assigns the same value to any two related tuples $\mb{t} \equiv \mb{t}'$, which implies that $\nu$ is symmetric, a contradiction.

We now turn our focus to $I_{2}$. Let $f$ be any $r$-ary polymorphism of $\Gamma$ (for example the $i^{th}$ $(1\leq i\leq r)$ projection operation defined as $f(x_1,\dots,x_r)=x_i$). We shall construct a solution $\nu$ of $I_{2}$ in the following way. Recall that in the definition
of $I_{2}$ we have partitioned the tuples of $V_{[\mb{t}]_{\equiv}}$ in $k_{[\mb{t}]_{\equiv}}$ consecutive blocks. In the first stage, all the elements in each block are placed in the same coordinate of $U$. So, if $\mb{t}_1,\dots,\mb{t}_{|D|^{r}}$ are the tuples associated to coordinates $1,\dots,|D|^{r}$ and hence block $1,\dots,|D|^{r}$ respectively, then we only need that all variables in the $i^{th}$ block are assigned to $f(\mb{t}_i)$ to satisfy all constraints added in the first stage. This assignment also satisfies the constraints added in the second stage, because if $f$ is an $r$-ary polymorphism of $\Gamma$,
then for every permutation $\sigma$ on $\{1,\dots,r\}$, the operation $g(x_1,\dots,x_r)$ defined as $f(x_{\sigma(1)},\dots,x_{\sigma(r)})$ is also a polymorphism of $\gm$. 
\end{claimproof}

To sum up, we constructed two instances $I_{1}$ and $I_{2}$, the latter of which is satisfiable while the former is not, which have the same iterated degree sequence. It follows from Corollary \ref{cor:sameDegSeq->sameSol} that any distributed algorithm will give the same output on both instances, meaning that no algorithm can solve $\dcsp(\{U\})$. From Remark \ref{re:indicator} then it follows that there are also no algorithms that solve $\dcspgm$.

\end{proof}

\subsection{Tractable Languages} \label{sec:tractable}

In this section we turn our attention to the tractable case. In particular we shall show the following:

\begin{theorem}
\label{th:tractable}
Let $\Gamma$ be a constraint language that is invariant under symmetric polymorphisms of all arities. Then there is an algorithm $\Alg$ that solves $\dcspgm$. The total running time, number of rounds, and maximum message size of $\Alg$ are, respectively, $\mathcal{O}(n^3m\log n)$, $\mathcal{O}(n^2)$, and $\mathcal{O}(m\log n)$ where $n$ and $m$ are the number of variables and constraints, respectively, of the input instance.
\end{theorem}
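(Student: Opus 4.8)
The plan is to leverage Theorem~\ref{thm:sympol=iso=idsi}, which guarantees that every satisfiable instance of $\cspgm$ admits a satisfying assignment constant on each $\sim_{\delta}$-class, and to turn this structural fact into a distributed algorithm that operates entirely on the level of iterated-degree classes rather than individual variables. The fundamental difficulty in the anonymous setting is that, by Proposition~\ref{prop:idsi->sameSol}, no algorithm can ever distinguish two iterated-degree-equivalent variables; hence any algorithm is forced to assign the same value to all variables in a $\sim_{\delta}$-class. Theorem~\ref{thm:sympol=iso=idsi} tells us that this forced behaviour is not an obstruction but exactly what we need: for tractable $\gm$ there always \emph{exists} such a ``blocked'' solution, so it suffices to search for assignments that are constant on $\sim_{\delta}$-classes. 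The algorithm $\Alg$ therefore has two phases. In the first phase every node computes its iterated degree via a colour-refinement procedure: each process exchanges its current colour with its neighbours and updates it to the multiset $\dd{k}(\cdot)$ of incoming labelled colours. By Proposition~\ref{prop:2nisEnough} this stabilises after at most $2n$ rounds, and since each colour can be encoded in $\mathcal{O}(m\log n)$ bits this accounts for the stated message-size and round bounds. After stabilisation, every process knows its own $\sim_{\delta}$-class.

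In the second phase the idea is to run a consistency/propagation procedure on the quotient structure. The concrete approach I would take is to adapt the $jpq$-consistency algorithm of~\cite{kozik2020solving}, which for languages with symmetric polymorphisms of all arities decides $\csp$ by establishing a suitable local consistency and then rounding. The key insight is that classical consistency algorithms need unique variable identifiers to reason about which constraints share which variables; here we replace identifiers by iterated-degree classes. Because the whole instance is symmetric under $\sim_{\delta}$, two variables in the same class are interchangeable from the point of view of consistency, so it is enough to maintain, for each class, a ``belief'' set of candidate values (a subset of $D$, or more precisely the information that $jpq$-consistency maintains). Each constraint process then prunes these belief sets by checking local satisfiability against $R_c$, sends the updated beliefs back to its incident variable processes, and variable processes intersect the incoming messages. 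This is iterated until a fixpoint is reached; a satisfiable instance will converge to nonempty beliefs from which, by the structure theorem, a blocked satisfying assignment can be read off, while an unsatisfiable instance collapses some belief to emptiness and every process outputs ``unsatisfiable''.

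The step I expect to be the main obstacle is establishing \emph{correctness} of the consistency phase in the anonymous, quotiented setting: one must prove that running consistency on classes rather than on individual variables neither loses soundness (a derived inconsistency genuinely certifies unsatisfiability) nor completeness (if the instance is satisfiable the procedure terminates with a consistent state that can actually be rounded to a blocked solution). Soundness is the easier direction, since any pruning step is justified by a genuine local constraint. Completeness is delicate: one must show that the consistency notion achievable at the level of classes is still strong enough for the rounding guarantee of~\cite{kozik2020solving} to apply, and here the structure theorem (Theorem~\ref{thm:sympol=iso=idsi}) is precisely the lever, as it certifies that the symmetric, class-constant solution that the algorithm is implicitly searching for really exists. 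A secondary obstacle is the complexity bookkeeping: I would argue that the colour-refinement phase dominates the round count ($\mathcal{O}(n^2)$, from $\mathcal{O}(n)$ refinement passes each requiring $\mathcal{O}(n)$ rounds to propagate colours across the network), that each round broadcasts messages of size $\mathcal{O}(m\log n)$ (encoding a class label as the list of its constituent labelled multisets), and that the per-node internal computation over all rounds multiplies out to the claimed $\mathcal{O}(n^3 m\log n)$ total running time. The remaining verification that the quotiented $jpq$-consistency reaches a fixpoint within the round budget and that the final decision is reported uniformly by all processes is then routine given these pieces.
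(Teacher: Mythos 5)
Your outline matches the paper's: colour refinement to compute iterated degrees, then a quotiented consistency procedure justified by Theorem~\ref{thm:sympol=iso=idsi}. But there is a genuine gap in where you locate the difficulty, and it hides the one idea that actually makes the proof work. You claim soundness is easy because ``any pruning step is justified by a genuine local constraint.'' It is not. The pruning rule the paper needs (and that any class-level algorithm is forced into) removes a value $d$ from $S_x$ when there is a walk $p$ from $x$ to a \emph{different} variable $y$ with $x\sim_{\delta}y$ such that $d\notin\{d\}+_S p$. For an arbitrary solution this certifies nothing about $x$, since the solution may assign $d$ to $x$ and something else to $y$; the deletion is sound only relative to \emph{blocked} solutions (those constant on $\sim_{\delta}$-classes). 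This is exactly where Theorem~\ref{thm:sympol=iso=idsi} must be invoked: it guarantees that a blocked solution exists whenever any solution does, so that the ``safety'' invariant (Lemma~\ref{le:Asafe}) really does let an empty $S^{\infty}_x$ certify unsatisfiability. You instead place the structure theorem in the completeness direction, where it is not needed: once the fixed point is nonempty, satisfiability follows from Theorem~\ref{the:kozik} alone, together with the (unstated in your proposal) observation that enforcing support along all walks between $\sim_{\delta}$-equivalent endpoints is \emph{at least as strong} as the closed-walk condition that \cite{kozik2020solving} requires. Without the notion of $S$-support via equivalent endpoints, the safety invariant, and this containment, the correctness argument does not go through.

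A secondary but real issue is the complexity bookkeeping. Plain colour refinement needs only one round per refinement pass (each node needs only its neighbours' colours), but a colour is then a nested multiset whose natural encoding blows up; the $\mathcal{O}(n^2)$ round count and the $\mathcal{O}(\log n)$ per-colour encoding come from a re-ranking step (Lemma~\ref{le:smaxsize}) in which, after each pass, the agents spend $\mathcal{O}(n)$ rounds collecting all current colours and replacing each by its rank in a canonical order. Your statement that ``each colour can be encoded in $\mathcal{O}(m\log n)$ bits'' conflates the colour with the message; if colours really cost $\mathcal{O}(m\log n)$ bits each, the consistency-phase messages (which carry up to $n$ triples each containing a degree) would exceed the claimed $\mathcal{O}(m\log n)$ bound. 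You also need an explicit bound on the length of witnessing walks (the paper uses $\mathcal{O}(n2^{|D|})$, constant in $|D|$ since $\Gamma$ is fixed) to cap the number of rounds per consistency iteration; without it the fixed-point computation has no a priori round budget.
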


Note that this implies the ``if'' part of Theorem \ref{thm:maintheorem-dec}. $\Alg$ is composed of two phases. In the first phase, a distributed version of the colour refinement algorithm allows every process to calculate its iterated degree. Then, thanks to Theorem \ref{thm:sympol=iso=idsi} we can use the degree of a variable as its ID for the second phase, implying that a distributed adapted version of the $jpq$-consistency algorithm \cite{kozik2020solving} where messages are tagged with a process' iterated degree solves the decision problem for $\gm$.

\paragraph*{Distributed Colour Refinement.} Let $I=(A,X,D,C,\alpha)$ be an instance of $\dcspgm$ and let $n=|X|$ and $m=|C|$. 
There is a very natural way to calculate an agent's iterated degree in a distributed way, both for variables and for constraints. This is a mere adaptation of the $1$-dimensional Weisfeiler-Leman algorithm, also known as colour refinement, an algorithm that partitions the vertices of a graph into classes by iteratively distinguishing them on the basis of their degree (see for example \cite{grohe2017color, grohe2020word2vec}). The algorithm proceeds in rounds. At round $k=0$, each agent $\alpha(v)$ for $v\in X\cup C$ computes $\delta_0(v)$ and broadcasts it to all its neighbours. At round $k>0$, each agent $\alpha(v)$ knows the $(k-1)^{th}$ degrees of its neighbours which it had received in the previous round, uses them to compute $\delta_k(v)$, and broadcasts it to its neighbours. If $k=2n$ (see Proposition \ref{prop:2nisEnough} in the Appendix)
then for every $x,x'\in X$ satisfying $x\sim^k_{\delta} x'$ we have that $x\sim_{\delta} x'$, which implies that we can essentially regard the $k^{th}$ iterated degree as the unique common ID for all variables that are iterated degree equivalent. Then in $2n$ rounds each agent $\alpha(v)$ can compute $\delta_{\infty}(v)$, where we use $\delta_{\infty}$ 
as a shorthand for $\delta_{2n}$. As we described it, the distributed colour refinement algorithm is not particularly efficient in terms of message complexity. Although this is not necessary to achieve polynomial time, we can reduce the space required to encode $\delta_{\infty}(v)$.
\begin{restatable}{lemma}{smaxsize} \label{le:smaxsize}
 Let $s_{\max}$ denote the size of the encoding of $\delta_{\infty}(v)$. A modified version of the distributed colour refinement algorithm that runs over $\mathcal{O}(n^{2})$ rounds achieves $s_{\max} = \mathcal{O}(\log n)$. The time at each round and the maximum size of a message are both bounded above by $\mathcal{O}(m s_{max})$.
\end{restatable}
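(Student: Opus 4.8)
The plan is to replace the ideal colour-refinement procedure---in which a node would broadcast the full nested multiset $\delta_k(v)$---by one in which every node only ever stores and transmits a short integer code $\mathrm{code}_k(v)$ naming its current class, subject to the invariant that $\mathrm{code}_k(v)=\mathrm{code}_k(v')$ if and only if $\delta_k(v)=\delta_k(v')$. Given such codes at level $k$, a node obtains its level-$(k+1)$ \emph{signature}, namely the pair consisting of $\mathrm{code}_k(v)$ and the multiset $\{(\ell_{v,w},\mathrm{code}_k(w)) : w\in N(v)\}$, after a single exchange of codes with its neighbours. The whole computation then consists of $2n$ macro-steps (this many suffice by Proposition \ref{prop:2nisEnough}), each of which refreshes the codes, and after the last macro-step we output $\mathrm{code}(v)$ as the encoding of $\delta_\infty(v)$.

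The crux is the re-encoding (renaming) performed after each macro-step: a node must replace its possibly long signature by a short code, and---because the network is anonymous---all nodes must agree on the same code for identical signatures even when the corresponding nodes lie far apart. My plan is to enforce this by making the code \emph{canonical}. Propagate signatures through the factor graph $G_I$ (which is connected by assumption) until every node knows the set $\Sigma_k$ of all distinct signatures occurring at level $k$, and then let each node set $\mathrm{code}_{k+1}(v)$ to be the rank of its own signature in the fixed lexicographic order on $\Sigma_k$. Since all nodes run the same deterministic ranking on the same set $\Sigma_k$, identical signatures receive identical codes and distinct ones receive distinct codes, which is exactly the required invariant. As $C$ is a set of constraints over a fixed $\gm$, there are at most $n+m$ distinct classes with $m$ polynomial in $n$, so each code---and in particular $\delta_\infty(v)$---fits in $s_{\max}=\mathcal{O}(\log n)$ bits.

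For the message-size bound the key observation is an edge count: $G_I$ has $\sum_{c}\deg(c)\le m\cdot\max_{R\in\gm}arity(R)=\mathcal{O}(m)$ edges because $\gm$ is finite, so each label together with each neighbour code contributes $\mathcal{O}(\log n)$ bits and the total size of all signatures---hence of the whole set $\Sigma_k$---is $\mathcal{O}(m\log n)$. This is what caps the message size at $\mathcal{O}(m s_{\max})=\mathcal{O}(m\log n)$: a single signature of a high-degree variable is already of this size, and the accumulated set $\Sigma_k$ is no larger. For the round count, the signatures only need to reach every node, and the diameter of $G_I$ is at most $2n$ (a shortest path alternates between variables and constraints and so visits at most $n$ distinct variables), so each renaming takes $\mathcal{O}(n)$ rounds and the whole procedure runs in $2n\cdot\mathcal{O}(n)=\mathcal{O}(n^2)$ rounds.

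The step I expect to be the main obstacle is reconciling this propagation with the stated \emph{per-round} cost: naively forwarding the whole accumulated set $\Sigma_k$ in every round would overload high-degree variable nodes, so the propagation must be organised so that each newly discovered signature is relayed across each edge only once, with a node re-sending only the increments it has just learned. The remaining bookkeeping---verifying by induction on $k$ that the coded refinement tracks $\sim_{\delta}^{k}$ exactly, that a node can detect stabilisation (no code changes across a macro-step) using only the known quantities $n$ and $m$, and that the local work of merging and sorting stays within $\mathcal{O}(m s_{\max})$---is routine once the canonical renaming is in place.
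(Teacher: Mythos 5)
Your proposal is correct and follows essentially the same route as the paper: broadcast the current-level signatures through the (connected) factor graph for $\mathcal{O}(n)$ rounds until every agent holds the full set of distinct signatures, then canonically re-encode by taking the rank of one's own signature in a common deterministic order, yielding $\mathcal{O}(\log n)$-bit codes and $\mathcal{O}(n^2)$ total rounds. The per-round-cost worry you flag is handled in the paper simply by noting that the accumulated set itself has size $\mathcal{O}(m s_{\max})$, so even forwarding it wholesale each round stays within the stated bound.
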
 

As we will see, the price of an increase in the number of rounds (from $n$ to $n^2$) is compensated by the effect of $s_{\max}$ on both time complexity and the size of the messages. 

\paragraph*{The Distributed Consistency Algorithm.} It is well known that if a constraint language $\Gamma$ has symmetric operations of all arities then it satisfies the so-called \emph{bounded width} property (see \cite{Barto2017}). We avoid introducing the definition of bounded width as it is not needed in our results and refer the reader to \cite{Barto2017} for reference. Then, it has been shown in \cite{kozik2020solving} that if $\Gamma$ has bounded width and $I\in \csp(\Gamma)$ satisfies a combinatorial condition called $jpq$-consistency, then $I$ has a solution. Instead of stating literally the result in \cite{kozik2020solving} we shall state a weaker version that uses a different notion of consistency, more suitable to the model of distributed computation introduced in the paper. 

A \emph{set system} $S$ is a subset of $X\times D$. We shall use $S_{x}$ to denote the set $\{d\in D \mid (x,d)\in S\}$. A walk of length $\ell$ (in instance $I$) is any sequence $x_0c_0\dots c_{\ell-1}x_\ell$ where $x_0,\dots,x_\ell$ are variables, $c_0,\dots,c_{\ell-1}$ are constraints, and $x_i,x_{i+1}\in c_i$ for every $0\leq i<\ell$. Note that walks are precisely the walks in the factor graph $G_I$ (in the standard graph-theoretic sense) starting and finishing in $X$. 
 
Let $S$ be a set system, $p$ be a walk, and $B\subseteq S_x$ where $x$ is the starting node of $p$. The \emph{propagation} of $B$ via $p$ under $S$, denoted $B+_{S} p$, is the subset of $D$ defined inductively on the length $\ell$ of $p$ as follows. If $\ell=0$ then $B+_{S} p=B$. Otherwise, $p=p'c_{\ell-1}x_{\ell}$ where
$p'$ is a path of length $\ell-1$ ending at $x_{\ell-1}$. Let $c_{\ell-1}=( \mb{s},R )$. Then we define
$B+_{S} p$ to contain all $e\in D$ such that there exists $d\in B +_{S} p'$ and $\mb{t}\in R$ such that for every $1\leq i\leq arity(R)$, $\mb{t}[i]$ satisfies the following conditions: 
\begin{enumerate}
    \item $\mb{t}[i]\in S_{\mb{s}[i]}$,
    \item if $\mb{s}[i]=x_{\ell-1}$ then $\mb{t}[i]=d$, and
    \item if $\mb{s}[i]=x_{\ell}$ then $\mb{t}[i]=e$.
\end{enumerate}

We are now ready to state the result from \cite{kozik2020solving} that we shall use.

\begin{theorem}[follows from \cite{kozik2020solving}]
\label{the:kozik}
 Let $I$ be an instance of $\csp(\gm)$ where $\gm$ has bounded width and let $S$ be a set system such that $S_{x} \neq\emptyset$ for every $x\in X$ and such that for every walk $p$ starting and finishing at the same node $x$ and for every $d \in S_{x} $, $d$ belongs to $\{d\} +_{S} p$. Then $I$ is satisfiable.
 \end{theorem}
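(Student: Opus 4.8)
The plan is to deduce the statement from the main result of \cite{kozik2020solving}, which guarantees a solution whenever a bounded-width instance satisfies $jpq$-consistency. Since our hypotheses on the set system $S$ are phrased via the propagation operator $+_S$ rather than via $jpq$-consistency directly, the whole task reduces to showing that a nonempty set system closed under propagation around closed walks is a witness of $jpq$-consistency for $I$. The template $\gm$ is unchanged throughout, so bounded width is inherited for free, and a satisfying assignment $\nu$ produced by Kozik's theorem will moreover satisfy $\nu(x)\in S_x$, hence be a genuine solution of $I$.

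First I would unpack $+_S$ on the shortest closed walks to extract local consistency. For a variable $x$, a constraint $c=(\mb{s},R)$ with $x\in c$, and the length-$1$ closed walk $p=xcx$, the definition of $\{d\}+_S p$ forces conditions~(2) and~(3) to coincide at every coordinate $i$ with $\mb{s}[i]=x$, collapsing them to $\mb{t}[i]=d$. Thus the hypothesis $d\in\{d\}+_S p$ says exactly that every $d\in S_x$ extends to a tuple $\mb{t}\in R$ with $\mb{t}[i]\in S_{\mb{s}[i]}$ for all $i$; that is, $S$ is arc consistent. The general hypothesis, that every $d\in S_x$ survives propagation around every closed walk through $x$, is then precisely a cycle-consistency condition layered on top of arc consistency.

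The core of the proof is to translate this pair of conditions into the $jpq$-consistency notion of \cite{kozik2020solving}. I would exploit the fact that propagation is additive in its first argument, namely that $B+_S p=\bigcup_{d\in B}(\{d\}+_S p)$, so that the singleton-valued hypothesis here completely controls the set-valued propagation appearing in the definition of $jpq$-consistency. Using this, I would check each closure clause required by $jpq$-consistency: arc consistency supplies the step-by-step extendability along walks, while the closed-walk hypothesis supplies the ``return'' guarantees around cycles. Once $S$ is recognized as a $jpq$-consistent strategy, the cited theorem applies verbatim and yields the desired solution inside $S$.

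The main obstacle is exactly this bridge between the two consistency notions: ours is stated in terms of single values propagating around closed walks, Kozik's in terms of its own combinatorial patterns, and reconciling them is a careful definitional unfolding showing that our (deliberately more restrictive, hence ``weaker theorem'') hypotheses imply every requirement of $jpq$-consistency. I expect no trouble from bounded width, since Kozik's result is applied to $I$ over the same $\gm$; the only point demanding care is that all propagation identities are taken relative to $S$ throughout, so that restricting tuples and values to $S$ never disturbs the reachability facts on which the argument relies.
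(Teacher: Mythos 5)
The paper gives no proof of this theorem at all --- it is stated as ``follows from \cite{kozik2020solving}'' and the bridge from the closed-walk propagation hypothesis to Kozik's $jpq$-consistency is left entirely to the citation. Your proposal takes exactly that intended route (unpack $+_S$ on length-$1$ closed walks to get arc consistency, note that concatenations of closed walks are closed walks so the hypothesis subsumes the $p^jq$ patterns, then invoke Kozik's theorem), and your reading of the length-$1$ case is correct, so this matches --- indeed slightly exceeds --- the level of detail the paper itself provides.
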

 
Our goal is to design a distributed algorithm that either correctly determines that an instance $I$ is unsatisfiable, or produces a set system $S$ verifying the conditions of Theorem \ref{the:kozik}. This is not possible in general due to the fact that agents are anonymous and hence a hypothetical algorithm that would generate a walk in a distributed way would be unable to determine if the initial and end nodes are the same. However, thanks to the structure established by Theorem \ref{thm:sympol=iso=idsi}, this difficulty can be overcome when $\Gamma$ has symmetric polymorphisms of all arities because, essentially, the iterated degree of a node can act as its unique identifier. To make this intuition precise we will need to introduce a few more definitions.

We say that a pair $(x,d)\in S$ is $S$-\emph{supported} if for every walk $p$ starting at $x$ and finishing at a node $y$ with $x \sim_{\delta} y$, we have that $\{d\}+_S p$ contains $d$. 

\begin{remark}\label{re:boundwalk}
We note that if $(x,d)\in S$ is not $S$-supported and $p=x_0c_0\dots x_{\ell}$ is a walk of minimal length among all walks witnessing that $(x,d)$ is not $S$-supported then $\ell\leq n2^{|D|}$. Indeed if we let $B_i=\{d\}+x_0c_0\dots x_i$, $i=0,\dots,\ell$ then we have that $(x_i,B_i)\neq (x_j,B_j)$ for every $0\leq i<j\leq \ell$, since otherwise the shorter walk $x_0c_0,\dots,x_i,c_{j},\dots,x_\ell$ would contradict the minimality of $p$. Since there are $n$ choices for each $x_{i}$ and $2^{|D|}$ choices for $B_{i}$, the bound follows.\lipicsEnd
\end{remark}

We say that a set system $S$ is \textit{safe} if for every solution $\nu\in I$ we have \[\nu(x)=\nu(y) \textnormal{ for all }x,y \in X \textnormal{ with } x \sim_{\delta} y \implies \nu(x)\in S_{x} \textnormal{ for all } x \in X.\]
Then, we have
\begin{restatable}{lemma}{Asafe} \label{le:Asafe} Let $S$ be a safe set system and let $(x,d)\in S$ be a pair that is not $S$-supported. Then $S \setminus\{(x,d)\}$ is safe.
\end{restatable}

Our distributed consistency algorithm (that is, the second phase of $\Alg$) works as follows. Every variable agent $\alpha(x)$ maintains a set $S_{x} \subseteq D$ in such a way that the set system $S$ is guaranteed to be safe at all times. As a result of an iterative process $S$ is modified. We shall use $S^{i}$ to denote the content of $S$ at the $i^{th}$ iteration, where an iteration is, in turn, a loop of $T=2n(2^{|D|}+1)=\mathcal{O}(n)$ consecutive rounds. The rationale behind this exact value will be made clear later. Initially, $S_{x}^0$ is set to $D$ for every $x\in X$. At iteration $i$ for $i \geq 1$, $S^{i}$ is obtained by removing all the elements in $S^{i-1}$ that are not $S^{i-1}$-supported. Then, in at most $n|D|=\mathcal{O}(n)$ iterations we shall obtain a fixed point $S^{\infty}$. 

The key observation is that when $\gm$ has symmetric polymorphisms of all arities, the satisfiability of $I$ can be determined from $S^{\infty}$. Indeed, if $S^{\infty}_x=\emptyset$ for some $x\in X$ then we can conclude from the fact that $S^{\infty}$ is safe and Theorem \ref{thm:sympol=iso=idsi} that $I$ has no solution. Otherwise, $S^{\infty}$ satisfies the conditions of Theorem \ref{the:kozik} and, hence, $I$ is satisfiable.

It remains to see how to compute $S^{i+1}$ from $S^i$. In an initial preparation step for every iteration, every variable agent $\alpha(x)$ sends $S^i_x$ to all its neighbours. To compute $S^{i+1}$ the algorithm proceeds in rounds. All the messages sent are sets containing triplets of the form $(\delta_{\infty},d,B)$ where $d\in D$, $B\subseteq D$, and $\delta_{\infty}$ is the iterated degree of some variable $x\in X$. It follows from the fact that there are at most $n$ possibilities for the degree of a variable that the size of each message is $\mathcal{O}(ns_{\max})$.

The agents controlling variables and constraints alternate. That is, variables perform internal and send events at even rounds and receive messages at odd rounds, while constraints perform internal and send events at odd rounds and receive messages at even rounds. More specifically, in round $j=0$ of iteration $i$, every variable agent $\alpha(x)$ sends to its neighbours the message $M$ containing all triplets of the form $(\delta_{\infty}(x),d,\{d\})$ with $d\in S^i_x$. At round $2j$ for $j>0$, $\alpha(x)$ computes $M=M_1\cup\cdots\cup M_r$ where $M_1,\dots,M_r$ are the messages it received at the end of round $2j-1$. Subsequently, for every triplet $(\delta_{\infty},d,B)\in M$ with $\delta_{\infty}=\delta_{\infty}(x)$ and $d\not\in B$, $\alpha(x)$ marks $d$ as `not $S^{i}$-supported'. Finally, it sends message $M$ to all its neighbours. This computation can be done in time $\mathcal{O}(rns_{\max})=\mathcal{O}(mns_{\max})$ provided that each message is stored as an ordered array.

In round $2j+1$, every constraint agent $\alpha(c)$ computes from the messages $M_x$ (received from each neighbour $\alpha(x)$ in the previous round) the set $M'_x$, which contains for every variable $y \in c$ and every $(\delta_{\infty},d,B)$ in $M_y$, the triplet $(\delta_{\infty},d,B+_{S^i} p)$ where $p=y,c,x$. Finally, it sends to each neighbour $\alpha(x)$ the corresponding message $M'_{x}$. Note that while $\alpha(c)$ doesn't know the address of $\alpha(x)$ specifically, knowing the label of the channel that connects them is sufficient to calculate $M'_{x}$ correctly and send the message accordingly. Moreover, for given $y$ and $x$, $\alpha(c)$ can compute $B+_{S^i} p$ in $\mathcal{O}(1)$ time as $\alpha(c)$ knows both $S^{i}_{y}$ and $S^{i}_x$. Hence, since the arity of the relations is fixed (as $\Gamma$ is fixed) the total running time at iteration $2j+1$ of a constraint agent $\alpha(c)$ is $\mathcal{O}(ns_{\max})$.

Now it is immediate to show by induction that for every $j\geq 0$, every $x\in X$ and $c\in C$ with $x\in c$ the message sent by $\alpha(x)$ to $\alpha(c)$ at the end of round $2j$ is precisely 
\[\{(\delta_{\infty}(y),d,\{d\}+p) \mid y\in X, d\in S^i_y, p \textnormal{ is a walk of length $j$ of the form $p=y,\dots,x$}\}\]
and the message sent by $\alpha(c)$ to $\alpha(x)$ at the end of round $2j+1$ is precisely 
\[\{(\delta_{\infty}(y),d,\{d\}+p)  \mid y\in X, d\in S^i_y, p \textnormal{ is a walk of length $j+1$ of the form $p=y,\dots,c,x$}\}.\]

By Remark \ref{re:boundwalk} only $2n2^{|D|}=T-2n$ iterations are needed to identify all elements in $S^{i}$ that are not $S^{i}$-supported. Hence, after exactly $T-2n$ rounds every variable agent $\alpha(x)$ computes $S^{i+1}_x$ by removing all the elements in $S^{i}$ that are marked as ``not $S^{i}$-supported''. 
If $S_{x}^{i+1} = \emptyset$, then $\alpha(x)$ initiates a wave, which is propagated by all its neighbours, broadcasting that an inconsistency was detected. In this case, in at most $2n$ additional rounds all agents can correctly declare that $I$ is unsatisfiable. Otherwise, a new iteration begins.

To sum up, the distributed consistency algorithm consists of $\mathcal{O}(n)$ iterations consisting, each, of $\mathcal{O}(n)$ rounds where the total running time for internal events at a given round is $\mathcal{O}(mns_{\max})$ and the maximum size of each message transmitted is $\mathcal{O}(ns_{\max})$. Together with the bounds given by Lemma \ref{le:smaxsize} for the distributed colour refinement phase, this completes the proof of Theorem \ref{th:tractable}.

\subsection{The Search Algorithm} \label{sec:search}

We conclude by presenting the proof of Theorem \ref{thm:maintheorem-ser}. The hardness part follows immediately from Theorem \ref{thm:maintheorem-dec} as the search problem is as difficult as the decision problem. For the positive result we shall present an adaptation of the algorithm solving the decision version. 
Let $I$ be an instance of $\dser$ where $\Gamma$ contains symmetric polymorphisms of all arities. In what follows we shall use intensively the fact that $\pol(\Gamma)$ is closed under composition. Let $J\subseteq D$ be minimal with the property that $f(D)=J$ for some unary polymorphism $f$ in $\pol(\Gamma)$. It is fairly standard to show that for every $r\geq 0$ there is a $r$-ary symmetric operation $g$ such that $g(x,\dots,x)=x$ for every $x\in J$. Indeed, let $f$ satisfy $f(D)=J$ and let $g$ be any $r$-ary symmetric polymorphism in $\pol(\Gamma)$. Then the unary operation $h$ defined by
$h(x)=f\circ g(x,\dots,x)$ is a unary polymorphism of $\Gamma$.
By the choice of $f$ we have $h(D)\subseteq J$. We note that $h(J)=J$ since otherwise $h
^2$ would contradict the minimality of $f$. Consequently, $h^{-1}$ belongs to $\pol(\Gamma)$ and, hence,
the $r$-ary operation defined as $h^{-1}\circ f \circ g$ satisfies the claim. This implies that if we enlarge the constraint language by adding all singletons $\{d\}$, $d\in J$, the resulting constraint language, which we shall denote by $\Gamma'$, still has symmetric polymorphisms of all arities. For convenience we also include $D$ in $\Gamma'$.   

The algorithm has two phases. In the first phase it runs the decision algorithm to determine whether the instance is satisfiable. As a byproduct, every variable agent $\alpha(x)$ has computed its iterated degree $\delta_{\infty}(x)$ and knows as well its rank in a prescribed ordering of all variable degrees $\delta_{\infty}^1,\dots,\delta_{\infty}^r$, $r\leq n$. This (partial) order will be used to coordinate between the agents. An $i$-agent, $1\leq i\leq r$ is any agent $\alpha(x)$ with $\delta_{\infty}(x)=\delta^i_{\infty}$. We also assume a fixed ordering on the elements in $D$. If the instance is unsatisfiable nothing else remains to be done so from now on we shall assume that the instance is satisfiable.

In the second phase the algorithm searches for a solution. Every variable agent $\alpha(x)$ maintains a set $F_x\subseteq D$ with the property that there is a solution $\nu$ that \emph{falls within} $F$, i.e, such that $\nu(x)\in F_x$ for every $x\in X$. Initially every agent $\alpha(x)$ sets $F_x=D$ so it is only necessary to make sure that this condition is preserved during the execution of the algorithm. The second phase contains two nested loops. The outer loop has $r$ iterations and the inner loop consists of at most $|D|$ iterations, so we shall use iteration $(i,d)$ to indicate the run of the algorithm at iteration $i=1,\dots,r$ of the outer loop and at iteration $d$ of the inner loop.  

At the beginning of iteration $(i,d)$ every variable agent $\alpha(x)$ defines $S_x \subseteq D$ to be $S_x=\{d\}$ whenever $\alpha(x)$ is an $i$-agent and $S_x=F_x$ elsewhere. Then it runs the distributed consistency algorithm starting at $S$ obtaining a fixed point $S^{\infty}$. We note that since all initial sets $S_x$ belong to $\Gamma'$ and $\Gamma'$ contains symmetric polymorphisms of all arities then the obtained fixed point $S^{\infty}$ correctly determines whether there exists a solution $\nu$ that falls within $S$. Then every $i$-agent $\alpha(x)$ checks whether $S^{\infty}_x=\emptyset$. In case of positive answer nothing else is done and round $(i,d)$ finishes. Otherwise, $\alpha(x)$ sets $F_x$ to $\{d\}$ and starts a wave to indicate to all processes that the $i^{th}$ iteration of the outer loop is finished and that the next iteration of the outer loop can start. 
When the $r$ iterations of the outer loop have been completed the set system $F$ contains only singletons. The assignment that sets every variable $x\in X$ to the only element in $F_x$ is necessarily a solution. This concludes the proof of Theorem \ref{thm:maintheorem-ser}.

\section{Conclusion}

We analysed the complexity of the distributed constraint satisfaction problem on a synchronous, anonymous network parametrised by the constraint language. We showed that, depending on the polymorphisms of $\gm$, $\dcspgm$ is either solvable in polynomial time, or not solvable altogether. A number of natural open questions arise in this context. For instance, it is not clear whether asynchronous networks are strictly more powerful than their synchronous counterpart. Moreover, it would be interesting to explore the role of allowing agents to make random choices - provided this is not used to create and share unique IDs.

In the spirit of \cite{grohe2007complexity}, one could consider characterizing the structural restrictions on tractable distributed CSPs, or in other words, determining which \textit{classes of networks} are tractable in the DCSP framework, regardless of the constraint language. The starting point for this analysis could be the work on fibrations by Boldi et al. (see for example \cite{Boldi1996Symmetry, Boldi2001effective}). In particular, we propose the question of establishing a connection between the universal fibration of a graph and its iterated degree sequence.

\bibliography{ms}

\appendix

\section{Proofs from Section \ref{sec:prelims}}

\begin{remark} \label{re:alternativemodel}
Throughout the paper, we assumed that both variables and constraints are controlled by agents in a distributed network (throughout this section, we will refer to this as model 1). However, when all the constraint relations in $\Gamma$ are binary it is also valid and, indeed, more common to assume that only variables are controlled by agents, and there is a communication channel between any two variable agents $\alpha(x)$ and $\alpha(x')$ whenever $x$ and $x'$ share a constraint (model 2) which is labelled with the constraint relation and the direction of the constraint.

It is very easy to see that in the binary case both models are equivalent. Indeed, for every CSP instance $(X,D,C)$, let $(A_1,X,D,C,\alpha_1)$ and $(A_2,X,D,C,\alpha_2)$ be the associated DCSP instances in model $1$ and $2$ respectively. It is easy to see that every algorithm in model 2 can be easily simulated by an algorithm in model 1. In particular, it is only necessary that at round $2j$ every variable agent $\alpha_1(x)$ replicates the $j^{th}$ round of $\alpha_2(x)$ (while every constraint agent $\alpha_1(c)$ remains idle). Then, round $2j+1$ is used to replicate the messages sent at round $j$. That is, whenever $\alpha_2(x)$ sends a message to a neighbour $\alpha_2(x')$ at round $j$, $\alpha_1(x)$ sends a message to $\alpha_1(c)$ at round $2j$, where $c$ is the constraint shared by $x$ and $x'$. At round $2j+1$ then $\alpha_1(c)$ forwards the message to $\alpha_1(x')$.

Similarly, any algorithm in model 1 can be replicated in model 2. In this case, at a given round $j$, every agent $\alpha_2(x)$ simulates the internal computation done at round $j$ by $\alpha_1(x)$ and all its neighbours.\lipicsEnd
\end{remark}

\begin{proposition} \label{prop:2nisEnough} 
Let $I=(A,X,D,C,\alpha)$ be an instance of $\dcspgm$ and let $v,v' \in X \cup C$. Let $k \geq 2n$ where $n = |X|$. Then, $v \sim_{\delta}^{k} v'$ implies $v \sim_{\delta} v'$.
\end{proposition}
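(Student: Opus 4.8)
The plan is to prove that the sequence of partitions induced by the relations $\sim_\delta^k$ stabilizes after at most $2n$ refinement steps, and that once it stabilizes it coincides with $\sim_\delta$. First I would observe the basic monotonicity fact: for every $k \geq 0$, the partition of $X \cup C$ induced by $\sim_\delta^{k+1}$ refines the one induced by $\sim_\delta^k$. This follows directly from the recursive definition $\delta_{k+1}(v) = \{(\ell_{v,w}, \delta_k(w)) \mid w \in N(v)\}$: if $\delta_{k+1}(v) = \delta_{k+1}(v')$, then there is a label-preserving bijection between $N(v)$ and $N(v')$ matching $k^{th}$ degrees, and in particular $\delta_k(v) = \delta_k(v')$, because $\delta_k(v)$ is itself determined by the multiset of pairs $(\ell_{v,w}, \delta_{k-1}(w))$ appearing in $\delta_{k+1}(v)$. (The base case $k = 0$ is immediate since $\delta_1$ records the labels and $\delta_0$-values of neighbours, which determine $\delta_0(v)$ via whether $v$ is a variable or a constraint.)

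Next I would use the standard stabilization argument for refinement operators. Since each step either strictly refines the partition or leaves it unchanged, and a strict refinement increases the number of classes by at least one, the partition can strictly refine at most $|X \cup C| - 1$ times before reaching a fixed point. The key additional claim is that \emph{once two consecutive partitions agree, all subsequent ones agree}: if $\sim_\delta^k$ and $\sim_\delta^{k+1}$ induce the same partition, then $\sim_\delta^{k+1}$ and $\sim_\delta^{k+2}$ do as well, and hence the partition is fixed for all larger indices. This is because $\delta_{k+2}(v)$ is computed from the $\sim_\delta^{k+1}$-classes of the neighbours exactly as $\delta_{k+1}(v)$ is computed from their $\sim_\delta^k$-classes; if these two class structures coincide on the whole graph, the refinement operator produces identical output. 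Consequently, the partition reaches its fixed point at some index $k^* \leq |X \cup C| - 1$, and for all $k \geq k^*$ we have $v \sim_\delta^k v' \iff v \sim_\delta^{k^*} v'$, which by definition means $v \sim_\delta v'$.

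It then remains to replace the bound $|X \cup C| - 1 = n + m - 1$ by $2n$, where $n = |X|$; this is the step I expect to require the most care, since a priori $m$ can be much larger than $n$. The idea is that constraint nodes do not drive the refinement independently: the $\sim_\delta$-class of a constraint is determined by the labelled multiset of $\sim_\delta$-classes of the variables in its scope. More precisely, I would argue that the partition of the \emph{variable} nodes can strictly refine at most $n - 1$ times, and each refinement of the variable partition triggers at most one further refinement of the constraint partition, because a constraint's degree at level $k$ is a function of the variable-degrees at level $k-1$ of its neighbours. Tracking the variable and constraint refinements in alternation, one sees that the whole partition stabilizes within $2n$ steps rather than $n + m - 1$. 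Since the excerpt only asks for the bound $k \geq 2n$, establishing that the variable partition (and hence, one step behind, the constraint partition) reaches a fixed point within $n$ rounds each suffices. The main obstacle is thus purely combinatorial bookkeeping: making precise that constraint classes are ``slaved'' to variable classes so that the number of effective refinement steps is controlled by $n$ alone.
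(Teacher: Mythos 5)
Your proposal is correct and follows essentially the same route as the paper: first the monotonicity of the partitions under refinement, then the generic stabilization bound $n+m$, and finally the improvement to $2n$ via the bipartiteness of the factor graph, observing that the variable partition $P^k$ determines the constraint partition $Q^{k+1}$ and vice versa, so that the chain $P^0, P^2, P^4, \dots$ must repeat within $n$ strict refinements. The ``combinatorial bookkeeping'' you flag as the delicate step is handled in the paper exactly as you sketch it, by noting that $P^{k-2}=P^{k}$ forces $(P^k,Q^k)$ to be a fixed point and that this must occur for some $k\leq 2n$.
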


\begin{proof}
We start by showing that for all non-negative integers $k,k'$ with $k \leq k'$, the partition induced by $\sim_{\delta}^{k'}$ on $X \cup C$ is at least as refined as the partition induced by $\sim_{\delta}^{k}$.
The proof goes by induction. Let $v,v' \in G_{I}$. Clearly if $\delta_{0}(v) \neq \delta_{0}(v')$, then $\delta_{k}(v) \neq \delta_{k}(v')$ for all $k \in \mathbb{N}$, so in particular $\delta_{1}(v) \neq \delta_{1}(v')$. Now assume that $\delta_{k}(v) = \delta_{k}(v')$ implies $\delta_{k-1}(v) = \delta_{k-1}(v')$. Then it is a clear consequence of the definition of $\delta_{k}$ that
$\delta_{k+1}(v) = \delta_{k+1}(v') \implies \delta_{k}(v) =  \delta_{k}(v')$ too as required.

Now it remains to show that if $v \sim_{\delta}^{2n} v'$, then $v \sim_{\delta}^{k} v'$ for all $k \geq 2n$. The result is immediate if we replace $2n$ by $n+m$. To achieve $2n$ we use the fact that the factor graph is bipartite. Denote by $P^{k}$ and $Q^{k}$ the partitions induced by $\sim_{\delta}^{k}$ on $X$ and $C$ respectively and note that if $P^{k-2}=P^{k}$ then $(P^{k},Q^{k})$ is a fixed point. We notice that $P^{k-2}=P^{k}$ must occur for some $k \leq 2n$ and we are done. 
\end{proof}

\idsisameSol*
\begin{proof} $(\Rightarrow)$. At the beginning of the algorithm, all processes are in the same state. Let $v$ be a node in the factor graph of $I$, and denote by $m_{t}(v)$ the message broadcast at time $t$ by $\alpha(v)$ to its neighbours. For any two nodes $v$, $v'$, $\delta_{1}(v) = \delta_{1}(v')$ is equivalent to $v$ and $v'$ having the same knowledge at the start of the algorithm. This means that the first internal and send events are the same at $\alpha(v)$ and at $\alpha(v')$, hence $m_{1}(v)=m_{1}(v')$. Then, it is easy to see by induction that $\delta_{t}(v)= \delta_{t}(v') \implies m_{t}(v)=m_{t}(v')$, which in turn implies that \[v \sim_{\delta} v' \implies m_{t}(v) = m_{t}(v') \quad \textnormal{at all times } t = 1,2,\ldots\]
This implies that at any time $t$, $\alpha(v)$ and $\alpha(v')$ send and receive the same messages, so they have the same knowledge and hence the internal events at $\alpha(v)$ and $\alpha(v')$ are the same at all time. In particular, if the algorithm terminates, then the terminating state is the same at $\alpha(v)$ and $\alpha(v')$, and therefore the decision and, in case of search, the value of $\nu$ at $\alpha(v)$ and $\alpha(v')$ are the same.

$(\Leftarrow)$. Consider the algorithm that calculates the iterated degree at each node (we detail the procedure in the proof of Theorem \ref{th:tractable}). If $v \not \sim_{\delta} v'$, then we can find an algorithm that on the basis of the iterated degree gives different outputs at $\alpha(v)$ and $\alpha(v')$.
\end{proof}

\section{Proofs from Section \ref{sec:BLP}}

\thmrounding*
\begin{proof}
It is clear that if there is no feasible solution to $\BLP(I)$, then $I$ is not satisfiable, regardless of the polymorphisms of $\gm$. So let $\gm$ be such that $\pol(\gm)$ contains symmetric polymorphisms of all arities and let $I$ be an instance of $\cspgm$ such that $\BLP(I)$ is feasible. We can assume that there exists some natural number $n$ such that in the feasible solution to $\BLP(I)$, all variables take rational values of the form $\frac{r}{n}$ for some integer $r$. That is, for all $x \in X$, $d \in D$, $c \in C$ and $\mb{t} \in R_{c}$ (where $R_{c}$ is the relation of $c$) there exist corresponding integers such that \[v(x,d)=\frac{r(x,d)}{n} \quad \textnormal{ and } \quad v(c,\mb{t})=\frac{r(c,\mb{t})}{n}.\]

Let $f$ be a symmetric polymorphism of $\Gamma$ of arity $n$. For every $x \in X$ we shall denote by $f_{x}$ the value of $f$ when applied to an $n$-tuple where each $d \in D$ appears exactly $r(x,d)$ times. We claim that the assignment $\nu: X \to D$ given by $\nu(x) = f_{x}$ satisfies $I$. To see this, consider an arbitrary constraint $c = (\mb{s}, R)$. Denote by $\mb{t}'$ the tuple obtained by applying $f$ coordinate-wise to $n$ tuples $\mb{t}_{1}, \ldots, \mb{t}_{n}$ chosen as follows: each tuple $\mb{t} \in R$ is chosen exactly $r(c,\mb{t})$ times. Clearly $\mb{t}' \in R$ since $f$ is a polymorphism of $R$. So, to show that $c$ is satisfied by $\nu$, it is enough to show that $\nu(\mb{s}[i]) = \mb{t}'[i]$. Now, $\mb{t}'[i]$ is the result of applying $f$ to the set of the $i^{th}$ elements of $\mb{t}_{1}, \ldots, \mb{t}_{n}$. But any $d \in D$ occurs in $\mb{t}_{1}[i], \ldots, \mb{t}_{n}[i]$ exactly $\sum_{ \mb{t}[i]=d} r(c,\mb{t}) = n \cdot  v(\mb{s}[i],d) = r(\mb{s}[i],d)$ times, and so, given that $f$ is symmetric, we have \[\mb{t}'[i]=f_{\mb{s}[i]} = \nu(\mb{s}[i])\] as required.
\end{proof}

\thLP*
\begin{proof}
We start by rewriting the program in the form
\begin{equation} \tag{\ref{eq:feasibProblem}}
    \exists? \mb{v} \in [0,1]^{V} \quad B\mb{v} \geq \mb{b}.
\end{equation}
by replacing every equality $a=b$ by the inequalities $a \geq b$ and $-a \geq -b$.

It will be convenient to index the rows and columns of $B$ not using positive integers. Let us start with the columns. Each column is associated to a variable of $\BLP(I)$, i.e, a variable of the form $v(x,d)$, $x\in X,d\in D$ or 
$v(c,\mb{t}), c\in C, \mb{t}\in R_c$. In the first case, we index the corresponding column with the pair $(x,d)$ whereas in the second case we index it with the pair $(c,\mb{t})$, and we denote by $V$ the set of all such indices. 

Now, let us turn our attention to the rows. Every equation in (\ref{eq:LP2}) gives rise to two rows that we shall index with $(x,+)$ and $(x,-)$. Similarly, every equation in (\ref{eq:LP3}) also gives rise to two rows that we shall index with $(c,i,d,+)$ and $(c,i,d,-)$. Let us denote by $W$ the set of all indexes for rows. 

We shall see later how to define an oracle which, given a probability $W$-vector $\mb{p}$
{(i.e, a vector $\mb{p}$ with non-negative entries such that the sum of all its entries is $1$)}, outputs a vector $\mb{v}$ which is a solution to the weaker problem
\begin{equation}
\tag{\ref{eq:probFeasibProblem}}
    \exists? \mb{v} \in [0,1]^{V} \quad \mb{p}^{T}B\mb{v} \geq \mb{p}^{T}\mb{b}
\end{equation}
if one exists, or correctly states that no such vectors exist otherwise. Note that if a solution exists to ($\ref{eq:feasibProblem}$), then it is necessarily also a solution to ($\ref{eq:probFeasibProblem}$), while the opposite is not true in general.

For every $w\in W$, let us denote by $B_w$ the row corresponding to $w$. If $w=(x,+)$ then, since the vector returned by the oracle satisfies 
$\mb{v}\in[0,1]^V$ it follows easily that $B_w\mb-\mb{b}[w]\in [-1,|D|]$. Similarly, if $w=(c,i,d,+)$ then $B_w\mb{v}-\mb{b}[w]\in [-1,\max_{R \in \gm}|R|]$. It follows that by setting $\ell=1$ and $\rho=\max\{|D|,\max_{R \in \gm}|R|\}$ any such oracle-given vector $\mb{v}$ satisfies the following condition: there is a fixed subset $J \subseteq W$ (consisting precisely of the positive rows) such that
\begin{align*}
   B_{w} \mb{v} -\mb{b}[w] \in [-\ell,\rho] \quad \forall w \in J,\\
   B_{w} \mb{v} -\mb{b}[w] \in [-\rho,\ell] \quad \forall w \not\in J.
\end{align*}
Such an oracle is known as a \textit{$(\ell,\rho)$-bounded oracle}. Then we have:

\begin{theorem}[\cite{Arora2012}] \label{thm:oracle->MWU} 
Let $\eps > 0$ be an arbitrary error parameter. Suppose that there exists an $(\ell,\rho)$-bounded oracle for the feasibility problem \textnormal{(\ref{eq:probFeasibProblem})}. Assume that $\ell \geq \frac{\eps}{2}$. Then there exists an algorithm which either finds $\mb{v}$ such that 
$B \mb{v} \geq \mb{b} - \eps$ whenever such $\mb{v}$ exists, or correctly concludes that no such $\mb{v}$ exists otherwise. Such algorithm makes $\mathcal{O}(\ell \rho \log(|W|) / \eps^{2})$ calls to the oracle.
\end{theorem}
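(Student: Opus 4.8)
The plan is to run the Multiplicative Weight Update procedure of Algorithm~\ref{alg:MWU} with the given $(\ell,\rho)$-bounded oracle $\oo$ for a number of rounds $T$ to be fixed, and to analyse it through the standard potential-function argument. First I would dispose of the infeasible branch: if at some round $t$ the oracle declares that no $\mb{v}$ solves~(\ref{eq:probFeasibProblem}) for the current vector $\mb{p}^{(t)}$, the algorithm halts and reports infeasibility. This is sound because any $\mb{v}$ satisfying the exact system $B\mb{v}\geq\mb{b}$ would satisfy $\mb{p}^{T}B\mb{v}\geq\mb{p}^{T}\mb{b}$ for every probability vector $\mb{p}$ (the inequality being a nonnegative combination of the rows), so the oracle can only fail when $B\mb{v}\geq\mb{b}$ has no solution at all. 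From here on I assume the oracle succeeds at every round, so that each returned $\mb{v}^{(t)}$ satisfies $(\mb{p}^{(t)})^{T}(B\mb{v}^{(t)}-\mb{b})\geq 0$, i.e.\ $\langle\mb{p}^{(t)},\pmb{\ell}^{(t)}\rangle\geq 0$, where $\pmb{\ell}^{(t)}=\frac{1}{\rho}(B\mb{v}^{(t)}-\mb{b})$ is the loss vector.

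The core of the argument tracks the potential $\Phi(t)=\sum_{w\in W}\mb{w}^{(t)}[w]$. Substituting the weight update rule together with the identity $\mb{w}^{(t)}[w]=\Phi(t)\,\mb{p}^{(t)}[w]$ gives $\Phi(t+1)=\Phi(t)\bigl(1-\eta\langle\mb{p}^{(t)},\pmb{\ell}^{(t)}\rangle\bigr)$, so from $1-z\leq e^{-z}$ and the oracle guarantee $\langle\mb{p}^{(t)},\pmb{\ell}^{(t)}\rangle\geq0$ I obtain the upper bound $\Phi(T+1)\leq\Phi(1)=|W|$. For a matching lower bound I fix an arbitrary row $w^{*}\in W$; since the weights stay nonnegative, $\Phi(T+1)\geq\mb{w}^{(T+1)}[w^{*}]=\prod_{t=1}^{T}\bigl(1-\eta\,\pmb{\ell}^{(t)}[w^{*}]\bigr)$, using $\mb{w}^{(1)}[w^{*}]=1$. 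Taking logarithms and using the elementary estimate $\ln(1-\eta z)\geq-\eta z-\eta^{2}z^{2}$, valid because $\eta\leq\tfrac12$ and $|\pmb{\ell}^{(t)}[w^{*}]|\leq1$, and combining the two bounds on $\Phi(T+1)$ yields the regret inequality
\begin{equation*}
0\;\leq\;\sum_{t=1}^{T}\langle\mb{p}^{(t)},\pmb{\ell}^{(t)}\rangle\;\leq\;\sum_{t=1}^{T}\pmb{\ell}^{(t)}[w^{*}]\;+\;\eta\sum_{t=1}^{T}\bigl(\pmb{\ell}^{(t)}[w^{*}]\bigr)^{2}\;+\;\frac{\ln|W|}{\eta}.
\end{equation*}

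Rearranging and dividing by $T$ gives, for every row $w^{*}$, a lower bound on the average loss $\frac1T\sum_t\pmb{\ell}^{(t)}[w^{*}]$, which after unpacking the definition of $\pmb{\ell}$ is exactly a lower bound on $B_{w^{*}}\mb{v}-\mb{b}[w^{*}]$ for the returned average $\mb{v}=\frac1T\sum_{t=1}^{T}\mb{v}^{(t)}$. To convert this into the target $B\mb{v}\geq\mb{b}-\eps$ I would control the quadratic term through the asymmetric range supplied by $(\ell,\rho)$-boundedness: for a row $w^{*}\in J$ one has $\pmb{\ell}^{(t)}[w^{*}]\in[-\ell/\rho,1]$, so the violating (negative) contributions to $\sum_t(\pmb{\ell}^{(t)}[w^{*}])^{2}$ are damped by the factor $\ell/\rho$, and symmetrically for $w^{*}\notin J$. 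Choosing the step size $\eta=\Theta(\eps/\ell)$ — which is legitimate precisely because the hypothesis $\ell\geq\eps/2$ keeps $\eta\leq\tfrac12$, so that the logarithmic approximation above remains valid — and balancing the two error terms against the target slack $\eps/\rho$ forces $T=\mathcal{O}(\ell\rho\log|W|/\eps^{2})$; since the algorithm issues exactly one oracle call per round, this is also the number of oracle calls.

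I expect the main obstacle to be this last width-dependent calibration, namely controlling $\sum_t(\pmb{\ell}^{(t)}[w^{*}])^{2}$ through the two-sided interval $[-\ell/\rho,1]$ (respectively $[-1,\ell/\rho]$) so that the variance term is of the smaller order $\ell/\rho$ in the relevant direction. This is what separates the sharp $\ell\rho$ iteration count from the crude $\rho^{2}$ bound one gets by estimating $|\pmb{\ell}^{(t)}[w^{*}]|\leq1$ uniformly, and it requires matching the sign structure of the oracle's output to the choice of $\eta$ and $T$ exactly as in the width-dependent analysis of~\cite{Arora2012}. By contrast the potential-function skeleton and the infeasibility branch are entirely routine, so the difficulty concentrates in this final balancing step.
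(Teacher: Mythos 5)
The paper does not actually prove this statement: Theorem~\ref{thm:oracle->MWU} is imported verbatim from \cite{Arora2012} and used as a black box, so there is no in-paper argument to compare yours against. Judged on its own terms, your sketch is the standard potential-function analysis from that survey and its skeleton is sound: the soundness of the infeasibility branch (any exact solution of $B\mb{v}\geq\mb{b}$ survives averaging by any probability vector), the recursion $\Phi(t+1)=\Phi(t)\bigl(1-\eta\langle\mb{p}^{(t)},\pmb{\ell}^{(t)}\rangle\bigr)$, the bound $\Phi(T+1)\leq |W|e^{-\eta\sum_t\langle\mb{p}^{(t)},\pmb{\ell}^{(t)}\rangle}$, the per-row lower bound via $\ln(1-\eta z)\geq-\eta z-\eta^2z^2$, and the resulting regret inequality are all correct.

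The one point your wording glosses over is the claim that rows $w^*\notin J$ are handled ``symmetrically.'' The two cases are genuinely asymmetric and lead to different coefficients. For $w^*\in J$ the loss lies in $[-\ell/\rho,1]$, so $(\pmb{\ell}^{(t)}[w^*])^2\leq|\pmb{\ell}^{(t)}[w^*]|\leq\pmb{\ell}^{(t)}[w^*]+2\ell/\rho$, and the regret inequality becomes $0\leq(1+\eta)\sum_t\pmb{\ell}^{(t)}[w^*]+2\eta T\ell/\rho+\ln|W|/\eta$; for $w^*\notin J$ the loss lies in $[-1,\ell/\rho]$, the analogous estimate is $|\pmb{\ell}^{(t)}[w^*]|\leq-\pmb{\ell}^{(t)}[w^*]+2\ell/\rho$, and one instead gets the coefficient $(1-\eta)$ on the linear term (this is exactly the pair of ``gains/losses'' corollaries in \cite{Arora2012}). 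Both branches then yield $\frac1T\sum_t\bigl(B_{w^*}\mb{v}^{(t)}-\mb{b}[w^*]\bigr)\geq-O(\eta\ell)-O\bigl(\rho\ln|W|/(\eta T)\bigr)$, and the choices $\eta=\Theta(\eps/\ell)$ (legal precisely because $\ell\geq\eps/2$ keeps $\eta\leq\frac12$) and $T=O(\ell\rho\log|W|/\eps^2)$ finish the argument as you indicate; note also that the step $(\pmb{\ell}^{(t)}[w^*])^2\leq|\pmb{\ell}^{(t)}[w^*]|$ silently uses $\ell\leq\rho$, which holds in the paper's instantiation. So your proposal is correct modulo spelling out this two-case estimate, which you rightly identified as the only non-routine part.
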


The algorithm that Theorem \ref{thm:oracle->MWU} refers to is Multiplicative Weight Update (MWU), a well-known technique that is widely used in optimisation and machine learning. MWU was discovered independently by researchers of different communities; for a survey of its different variants we refer the reader to [2]. The version that is relevant to our paper is described in Algorithm \ref{alg:MWU}.
Recall that the algorithm assumes that there is a feasible solution.

\setcounter{algocf}{0}
\begin{algorithm}[t]
\SetAlgoLined
\textbf{Initialisation:} Fix $\eta \leq \frac{1}{2}$ and let $\mb{w}^{(1)}$ be a $W$-vector, whose entries, called weights, are initially set to $1$. \\
\For{$t = 1, \ldots, T$ }{
Compute the probability vector $\mb{p}^{(t)} = \frac{1}{\Phi(t)} \mb{w}^{(t)}$, where $\Phi(t) = \sum_{j=1}^{|W|} \mb{w}^{(t)}[j]$\\
Let $\mb{v}^{(t)}$ be a solution satisfying $(\mb{p}^{(t)})^{T}B\mb{v}^{(t)} \geq (\mb{p}^{(t)})^{T}\mb{b}$ given by oracle $\oo$\\
 Compute the losses $\pmb{\ell}^{(t)} = \frac{1}{\rho} (B \mb{v}^{(t)}- \mb{b})$ \\
Compute the new weights $\mb{w}^{(t+1)} = \mb{w}^{(t)}(1-\eta \pmb{\ell}^{(t)}$) }
\KwRet $\mb{v}:= \frac{1}{T} \sum_{t=1}^{T} \mb{v}^{(t)}$\\
 \caption{Multiplicative Weight Update}
\end{algorithm}

We shall see that if we choose the oracle $\oo$ wisely then for every $x, x' \in X$ with $x \sim_{\delta} x'$ and every $d\in D$, the solution returned by the MWU algorithm assigns the same value to $v(x,d)$ and $v(x',d)$. 

To see this we need some more notation. We note that $\sim_{\delta}$ induces in a natural way an equivalence relation $\sim_V$ on $V$. In particular,
we have that $v,v'\in V$ are $\sim_V$-related if $v=(x,d)$ and $v'=(x',d)$ where $x\sim_{\delta} x'$ and $d\in D$, or $v=(c,\mb{t})$ and $v'=(c',\mb{t})$ 
where $c\sim_{\delta} c'$ and $\mb{t} \in R_c$ (note that, necessarily, $R_c=R_{c'}$). Similarly $\sim_{\delta}$ induces an equivalence relation, denoted $\sim_W$, on $W$. More specifically, we have that $w,w'\in W$ are $\sim_W$-related if $w=(x,s)$ and $w'=(x',s)$ where $x\sim_{\delta} x'$ and $s\in\{+,-\}$ or $w=(c,i,d,s)$ and $w'=(c',i,d,s)$ where $c\sim_{\delta} c'$, 
$i\in\{1,\dots,arity(c)\}$, $d\in D$, and $s\in\{+,-\}$.

Now, we say that a $V$-vector $\mb v$ is $\sim_V$-\emph{preserving} if $\mb{v}[v]=\mb{v}[v']$ whenever $v\sim_V v'$ and we similarly define $\sim_W$-preserving $W$-vectors. So it is enough to show that there exists some oracle $\oo$ that guarantees that at each iteration $t$ of the WMU algorithm, $\mb{v}^{(t)}$ is $\sim_V$-preserving. To this end we need the following easy properties.
\begin{claim} For all $\sim_V$-preserving $V$-vectors $\mb{v}$ and for all $\sim_W$-preserving $W$-vectors $\mb{w}$, we have that
\begin{enumerate}
\item $B\mb{v}$ is $\sim_W$-preserving; 
\item ${\mb{w}}^T B$ is $\sim_V$-preserving. 
\end{enumerate}
\end{claim}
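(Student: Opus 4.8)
The plan is to reduce both parts of the claim to a single structural fact about $\sim_{\delta}$: iterated-degree equivalence of two nodes yields a label- and degree-preserving bijection of their neighbourhoods. Concretely, since $\sim_{\delta}$ is a fixed point of colour refinement (Proposition \ref{prop:2nisEnough}), for any two nodes $u\sim_{\delta} u'$ of $G_I$ the definition of the iterated degree gives the multiset equality $\{(\ell_{u,w},\delta_{\infty}(w))\mid w\in N(u)\}=\{(\ell_{u',w'},\delta_{\infty}(w'))\mid w'\in N(u')\}$, which provides a bijection $\phi\colon N(u)\to N(u')$ with $\ell_{u,w}=\ell_{u',\phi(w)}$ and $w\sim_{\delta}\phi(w)$ for every $w\in N(u)$. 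I would first record the two consequences actually used. If $c\sim_{\delta} c'$ are constraints, then $R_c=R_{c'}$ and, because the label $\ell_{x,c}=(S,R)$ encodes exactly the set $S$ of scope positions occupied by $x$, the bijection $\phi$ forces $\mb{s}_{c}[i]\sim_{\delta}\mb{s}_{c'}[i]$ for every position $i$. Dually, if $x\sim_{\delta} x'$ are variables, then $\phi$ restricts to a bijection between the incidences $\{(c,i)\mid \mb{s}_c[i]=x\}$ and $\{(c',i)\mid \mb{s}_{c'}[i]=x'\}$ that preserves both the position $i$ and the iterated degree of the constraint.

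Next I would prove part (1) by a direct computation of $(B\mb{v})[w]$ for each of the four row types, comparing $\sim_W$-related rows. For rows $(x,\pm)$ coming from (\ref{eq:LP2}) the value is $\pm\sum_{d\in D}\mb{v}[(x,d)]$, which is preserved because $(x,d)\sim_V(x',d)$ and $\mb{v}$ is $\sim_V$-preserving. For rows $(c,i,d,\pm)$ coming from (\ref{eq:LP3}) the value is $\pm\bigl(\sum_{\mb{t}\in R_c,\,\mb{t}[i]=d}\mb{v}[(c,\mb{t})]-\mb{v}[(\mb{s}_c[i],d)]\bigr)$; here $R_c=R_{c'}$ makes the summation index set identical and $(c,\mb{t})\sim_V(c',\mb{t})$ matches the sum term by term, while $\mb{s}_c[i]\sim_{\delta}\mb{s}_{c'}[i]$ together with $\sim_V$-preservation handles the subtracted term. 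Part (2) is symmetric: I would compute $(\mb{w}^TB)[v]$ for the two column types. For a column $(x,d)$ only the rows $(x,\pm)$ and the rows $(c,i,d,\pm)$ with $\mb{s}_c[i]=x$ contribute; comparing with $(x',d)$, the first pair matches by $\sim_W$-preservation, while the incidence bijection above matches the remaining terms. For a column $(c,\mb{t})$ only the rows $(c,i,\mb{t}[i],\pm)$ contribute, and these are matched position by position with those of $(c',\mb{t})$ using $c\sim_{\delta} c'$ and $R_c=R_{c'}$.

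The only genuinely non-routine step is the structural lemma, and within it the observation that an edge label records the \emph{full} set of scope positions, so that the neighbourhood bijection transports scope positions consistently; once this is in place, both parts are bookkeeping over the row/column types, and the two parts are essentially transposes of one another. I expect the main subtlety to be handling a variable that occupies several positions in the same constraint (so that a single neighbour of $c$ accounts for several rows and columns), which is exactly why it is essential that the label carries the whole position set $S$ rather than a single index.
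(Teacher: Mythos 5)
Your proof is correct and follows essentially the same route as the paper's: both arguments compute the relevant entry of $B\mb{v}$ or $\mb{w}^{T}B$ for each row/column type and match terms using the fact that $\sim_{\delta}$-equivalent nodes have neighbourhoods that agree as multisets of (label, degree) pairs, which is exactly what makes the incidence counts line up. The only cosmetic differences are that you package this fact as an explicit neighbourhood bijection while the paper partitions $C_x$ into classes $C_{x,R,S,[c]}$ and compares cardinalities, and that you spell out part (1) where the paper declares it analogous.
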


\begin{claimproof}
We include only the proof of (2) as the proof of (1) is analogous and, indeed, simpler. Let $\mb{v}:={\mb{w}}^T B$. An easy computation shows that 
\begin{equation*} \label{eq:coeffxd}
    \mb{v}(x,d) =  \mb{w}(x,+)-\mb{w}(x,-)- \sum_{c\in C_x} \sum_{\substack{1\leq i\leq arity(c) \\ \mb{s}_{c}[i]=x}} \big( \mb{w}(c,d,i,+)-\mb{w}(c,d,i,-) \big)
\end{equation*}
where we write $C_x$ to denote the set of all constraints in $C$ where $x$ appears in the scope, and \begin{equation*} \label{eq:coeffcd}
    \mb{v}(c,\mb{t}) =  \sum_{\substack{1 \leq i \leq arity(c) \\ \mb{t}[i]=d}} \big( \mb{w}(c,d,i,+) - \mb{w}(c,d,i,-) \big)
\end{equation*}
It is immediate to see that, if $\mb{w}$ is $\sim_{W}$-preserving, then $\mb{v}(c,\mb{t})=\mb{v}(c',\mb{t})$ whenever $c\sim_{\delta} c'$. Let us show that $\mb{v}(x,d)=\mb{v}(x',d)$ whenever $x\sim_{\delta} x'$. Since $\mb{w}$ is $\sim_W$-preserving we have that $\mb{w}(x,s)=\mb{w}(x',s)$
for $s\in\{+,-\}$ and hence we only need to show that $\varphi_x(C_x)=\varphi_{x'}(C_{x'})$ where $\varphi_x(C_x)$ is a shorthand for
\[\sum_{c\in C_{x}} \sum_{\substack{1\leq i\leq arity(c) \\ \mb{s}_{c}[i]=x}} \big(\mb{w}(c,d,i,+)-\mb{w}(c,d,i,-) \big)\]
and $\varphi_{x'}(C_{x'})$ is defined analogously.

Now, for every $R\in\Gamma$, every $S \subseteq \{1,\dots,arity(R)\}$, and every class $[c]$ of equivalent constraints, let $C_{x,R,S,[c]}$ be the set of constraints in $C_x$ that belong to $[c]$, whose constraint relation is $R$, and whose scope $\mb{s}$ satisfies the following: $(i\in S)\Leftrightarrow \mb{s}[i]=x$ for every $i\in \{1,\dots,arity(R)\}$. Note that since $C_x$ and $C_{x'}$ can be partitioned as the union of sets of this form it is only necessary to show
that $\varphi_x(C_{x,R,S,[c]})=\varphi_{x'}(C_{x',R,S,[c]})$ for every choice of $R$, $S$, and $[c]$. To see this it is enough to note that $|C_{x,R,S,[c]}|=|C_{x',R,S,[c]}|$ (because $x\sim_{\delta} x'$) and that, since $\mb{w}$ is $\sim_W$-preserving, for every
constraint $c'\in[c]$ and every choice of $d$, $i$, and $s$, we have $\mb{w}(c',d,i,s)=\mb{w}(c,d,i,s)$.
\end{claimproof}

Now, consider the oracle $\oo$ that, given a $W$-vector $\mb{p}$, returns the $V$-vector $\mb{v}$ defined as $\mb{v}[v]=1$ if $\mb{p}^T B[v]$ is positive and $\mb{v}[v]=0$ otherwise. Since $\mb{v}$ maximizes $\mb{p}^T B\mb{v}$ under the restriction $\mb{v}\in[0,1]^V$ it follows that 
$\mb{v}$ satisfies (\ref{eq:probFeasibProblem}). Furthermore, it is easy to see that if $\mb{p}$ is $\sim_W$-preserving then $\mb{v}$ is $\sim_V$-preserving.

Now, note that by definition both $\mb{w}^{(1)}$ - which is an all-ones $W$-vector - and $\mb{b}$ are $\sim_W$-preserving. It follows easily by induction that for each $t$, $\mb{v}^{(t)}$ is $\sim_V$-preserving and $\mb{w}^{(t)}$ is $\sim_W$-preserving. Hence, if we call algorithm WMU iteratively with $T\rightarrow \infty$ we obtain in the limit a feasible solution satisfying the conditions of the statement. We note here that, although we have not included explicitly any inequalities requiring that all the variables in $\BLP(I)$ take values in the range $[0,1]$, this is guaranteed by the fact that all the entries of the vector returned by $\oo$ are in the range $[0,1]$. This concludes the proof of Theorem \ref{th:LP}. \end{proof}


\section{Concepts from Section \ref{sec:complexity}}

\paragraph*{pp and efpp definability} We shall start by presenting a notion of definability for relations that is closely associated to polymorphisms. A relation $R \subseteq D^{k}$ is said to be \textit{primitive positive definable}, most commonly shortened to \textit{pp-definable}, from a constraint language $\gm$ over the same domain $D$ if there exists a pair
$((x_1,\dots,x_k),I)$ with $I=(X,D,C)$ an instance of $\csp(\Gamma\cup\{\eq_D\})$ where $\eq_D=\{(d,d) \mid d\in D\}$ is the equality relation, and $x_1,\dots,x_k$ are distinct variables in $X$ such that for every tuple $\mb{t} \in D^k$
\[\mb{t} \in R \iff \text{ there is a solution $\nu$ of $I$ such that } \mb{t}[i]=\nu(x_i) \text{ for all } 1\leq i\leq k\]

A constraint language $\gm'$ is pp-definable from $\gm$ if all the relations in $\gm'$ can be pp-defined from $\gm$. The following complexity reduction between CSP classes is well known. 

\begin{theorem}[see \cite{Barto2015CSP}]
If $\gm$ pp-defines $\gm'$, then $\csp(\gm')$ is log-space reducible to $\cspgm$.
\end{theorem}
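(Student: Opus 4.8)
The plan is to give the standard gadget-replacement reduction: from an instance $I'$ of $\csp(\gm')$ I would build an equisatisfiable instance $I$ of $\cspgm$ and then verify that the construction runs in logarithmic space. First, for every $R \in \gm'$ I fix once and for all a pp-definition, i.e. a pair $((x_1^R,\dots,x_k^R),I_R)$ with $I_R=(X_R,D,C_R) \in \csp(\gm\cup\{\eq_D\})$ and $x_1^R,\dots,x_k^R$ distinct, witnessing $\mb{t}\in R \iff I_R$ has a solution $\nu$ with $\nu(x_i^R)=\mb{t}[i]$ for all $i$. Since $\gm'$ is finite, these gadgets are constants of the reduction.

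Given $I'=(X',D,C')$, I would replace each constraint $c=(\mb{s},R)\in C'$ by a fresh copy of $I_R$ in which the non-distinguished variables are renamed apart (tagged with $c$) while each distinguished variable $x_i^R$ is identified with the scope variable $\mb{s}[i]$. The union of all these copies is an instance $I''$ of $\csp(\gm\cup\{\eq_D\})$ whose only variables shared between distinct gadgets are the original variables of $X'$. To eliminate equality, observe that the $\eq_D$-constraints of $I''$ generate an equivalence relation $\approx$ on its variables; I contract each class to a single representative and delete the $\eq_D$-constraints, obtaining the final instance $I\in\cspgm$.

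For correctness I would argue both directions. If $\nu'$ satisfies $I'$, then for each $c=(\mb{s},R)$ we have $\nu'(\mb{s})\in R$, so the corresponding copy of $I_R$ has a solution agreeing with $\nu'$ on the scope; as distinct gadgets share only $X'$-variables, these partial solutions glue to a solution of $I''$, which respects $\approx$ and hence descends to a solution of $I$. Conversely, any solution of $I$ lifts to $I''$, and its restriction to each gadget is a solution of $I_R$, which forces the image of the scope to lie in $R$; thus its restriction to $X'$ satisfies $I'$.

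Finally, the log-space bound. Each gadget has size at most $\max_{R\in\gm'}|I_R|=\mathcal{O}(1)$, so producing $I''$ is a purely local, constant-size substitution needing only a logarithmic counter to index the variables and constraints of $I'$. The one step requiring care is computing $\approx$ together with a canonical representative of each class, since a class may span several gadgets through shared $X'$-variables; this is precisely undirected $s$--$t$ connectivity in the graph whose edges are the $\eq_D$-constraints, which is computable in logarithmic space, and a representative (say, the least vertex reachable from a given one) can be selected within the same bound. Emitting each constraint of $I$ then amounts to looking up, for each scope coordinate, the representative of its variable, again in log-space. I expect this equality-contraction step to be the only genuine obstacle: the gadget substitution and the correctness argument are routine once $\gm$ and $\gm'$ are fixed finite languages.
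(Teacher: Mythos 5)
Your proof is correct, and it is the standard gadget-replacement argument; note that the paper itself does not prove this theorem but cites it as a known result from \cite{Barto2015CSP}, so there is no in-paper proof to compare against. Two remarks. First, you correctly identify the equality-contraction step as the only delicate point for the log-space bound: computing the classes of $\approx$ and canonical representatives is undirected connectivity, and placing that in logarithmic space requires Reingold's theorem (undirected $s$--$t$ connectivity is in L), which you should cite explicitly; without it one only gets a polynomial-time (or NL) reduction. Everything else --- the constant-size gadgets (using finiteness of $\gm'$), the renaming-apart of auxiliary variables, and the two-directional gluing argument --- is routine and handled correctly, including the implicit case of repeated variables in a scope, where identifying two distinguished variables of a gadget is consistent with the semantics of the constraint. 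Second, it is worth observing how this relates to what the paper actually does prove: in the distributed setting the equality-contraction step is exactly what fails, since anonymous agents cannot merge variables across gadgets, which is why the paper introduces the more restrictive notion of efpp-definability (no equality allowed) and proves the corresponding reduction for $\dcsp$ in Proposition \ref{prop:DCSPreduction} by a simulation in which each constraint agent internally simulates its entire gadget. So the step you single out as the ``only genuine obstacle'' in the classical reduction is precisely the obstacle that forces a different definability notion in the distributed model.
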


However, in the distributed setting, allowing equality introduces a few technical difficulties. Fortunately, this obstacle could be overcome by considering a more restricted notion of pp-definability which, following \cite{Jonssonetal17}, we shall call efpp-definability, where equality is not allowed. More precisely, we shall say that relation $R$ is \textit{equality-free primitive positive definable} (efpp-definable, for short) from $\Gamma$ if it is pp-definable and, in addition, the instance $I$ witnessing the pp-definition belongs to $\csp(\Gamma)$. That is, we are not allowed to use the equality relation in the instance, unless, of course, it belongs already to $\Gamma$.
Then, we have:

\begin{proposition}
 \label{prop:DCSPreduction}
Assume that $\gm'$ is efpp-definable from $\gm$. If $\dcspgm$ is solvable in polynomial time (resp. finite time) then so is $\dcsp(\gm')$.
\end{proposition}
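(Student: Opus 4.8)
The plan is to establish a many-one reduction from $\dcsp(\gm')$ to $\dcspgm$ that works \emph{locally} and \emph{anonymously}, so that it can be carried out in the distributed model without introducing IDs or violating the synchronous round structure. The key idea is that each constraint of $\dcsp(\gm')$ gets replaced by a fixed ``gadget'': a small collection of $\gm$-constraints together with some auxiliary variables, as prescribed by the efpp-definition of each relation $R'\in\gm'$. Since the definition is equality-free, every gadget is a genuine instance of $\csp(\gm)$ and no shared-variable identifications need to be communicated — which is precisely why efpp-definability, rather than plain pp-definability, is the right notion here.

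**First I would** make the reduction precise. Given an instance $I'=(A',X',D,C',\alpha')$ of $\dcsp(\gm')$, I build an instance $I$ of $\dcspgm$ as follows. Each variable agent $\alpha'(x)$ of $I'$ survives as a variable agent of $I$. Each constraint $c'=(\mb{s}',R')\in C'$ is expanded: fix the efpp-definition $((y_1,\dots,y_k),I_{R'})$ of $R'$ with $I_{R'}=(X_{R'},D,C_{R'})\in\csp(\gm)$ and distinguished variables $y_1,\dots,y_k$. In $I$, the agent $\alpha'(c')$ is simulated by a cluster of agents controlling a private copy of the internal variables $X_{R'}\setminus\{y_1,\dots,y_k\}$ and of the constraints in $C_{R'}$, where the distinguished variables $y_i$ are identified with the scope entries $\mb{s}'[i]$. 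The channels of $I$ connect each external variable $\mb{s}'[i]$ to the gadget agents exactly as dictated by $I_{R'}$. Because $\gm'$ is finite and each defining instance $I_{R'}$ has constant size, each constraint of $I'$ is replaced by $\mathcal{O}(1)$ new agents and channels, so $I$ has size linear in that of $I'$.

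**The central claim** I would then verify is twofold. On the correctness side, an assignment $\nu':X'\to D$ satisfies $I'$ if and only if it extends to a satisfying assignment $\nu$ of $I$: this is exactly the content of the efpp-definition of each $R'$, applied constraint by constraint (the ``only if'' direction uses that each satisfying partial assignment of the external variables extends over the internal gadget variables, and the ``if'' direction restricts back). Hence $I'$ is satisfiable iff $I$ is, so a correct decision algorithm for $\dcspgm$ run on $I$ yields the correct decision for $I'$. On the distributed side, I would check that each original agent $\alpha'(x)$ or $\alpha'(c')$ of $I'$ can locally simulate its corresponding agents in $I$: since every gadget agent is controlled (in simulation) by the single agent $\alpha'(c')$, all the gadget's internal send/receive events are purely local computations inside $\alpha'(c')$, and only the messages crossing between $\alpha'(c')$'s cluster and the external variable agents $\mb{s}'[i]$ need to travel over the real channels of $I'$. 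Each round of the algorithm for $I$ is thereby simulated by a constant number of rounds in $I'$, so finite time maps to finite time and polynomial time maps to polynomial time, as required.

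**The main obstacle** I anticipate is the anonymity and channel-labelling bookkeeping, rather than the satisfiability equivalence, which is standard. Concretely, I must ensure the simulation respects the rule that identical messages are sent through identically labelled channels: within a single gadget for $R'$, the internal channels carry fixed gadget-determined labels, and the external channels inherit well-defined labels from $I_{R'}$, so an agent simulating the gadget can always tell its internal sub-channels apart and route correctly. The one point demanding care is that when the \emph{same} external variable $x=\mb{s}'[i]=\mb{s}'[j]$ occurs in two scope positions of one constraint, the efpp-definition forbids equating two distinct distinguished variables via $\eq_D$, but it \emph{does} permit a single variable to occupy both roles; I would note that this is exactly why efpp-definability suffices and handle it by letting the shared variable connect to the gadget along the two corresponding (distinctly labelled) channels. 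Once this labelling is set up consistently, the round-by-round simulation and the resource bounds follow routinely.
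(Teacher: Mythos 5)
Your proposal is correct and follows essentially the same route as the paper: replace each $\gm'$-constraint by the gadget given by its efpp-definition (identifying the distinguished variables with the scope, keeping the internal variables fresh), let the original constraint agent simulate all gadget agents internally, and observe that only the messages between external variables and the gadget need to cross real channels. Your additional remarks on channel labelling and repeated scope variables are points the paper leaves implicit, but they do not change the argument.
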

\begin{proof} 
Given an algorithm $\Alg$ that solves $\dcspgm$ we can design a new algorithm $\Alg'$ for $\dcsp(\Gamma')$ that given an instance $I'=(A',X',D,C',\alpha')$ of $\dcsp(\gm')$ simulates the execution of $\Alg$ with the instance $I=(A,X,D,C,\alpha)$ of $\dcspgm$ defined as follows. For every 
    constraint $c=((s_1,\dots,s_k),R)$ in $C'$, consider the pair $((x_1,\dots,x_k),I_R)$ defining $R$ and replace 
    constraint $c$ by the instance $I_c$ (meaning all its variables and constraints) obtained from $I_R$ by renaming the variables so that $x_i=s_i$ for every $1\leq i\leq k$ and the rest of variables in $I_c$ are fresh. Then $\alpha$ is defined such that it agrees with $\alpha'$ over $X'$ and, as usual,
    every variable and constraint in $I$ is controlled by a different agent.
    
    The simulation is as follows. At each round, for every $x\in X'$, $\alpha'(x)$ simulates the execution of $\alpha(x)$ as in $\Alg$, and for every $c\in C'$, $\alpha'(c)$ simulates the execution of all constraints and fresh variables in $I_c$. We note that no new communication channels need to be created as this simulation is done internally by $\alpha'(c)$. The transmission of messages can be also easily simulated for every pair of neighbours $\alpha(x)$ and $\alpha(c)$ in $I$. In fact, if $x\not\in X'$, then both $\alpha(x)$ and $\alpha(c)$ are simulated by the same agent $\alpha'(c)$ in $I'$ (and, hence, no communication is required). Otherwise, if $x\in X'$, $\alpha(x)$ is simulated by $\alpha'(x)$ and $\alpha(c)$ is simulated by some neighbour $\alpha'(c')$ of $\alpha(x)$.    
\end{proof}

\begin{remark}
\label{re:indicator}
We note here that for every $r\geq 1$, the indicator problem of order $r$ of $\Gamma$ constitutes an efpp-definition of the $|D|^r$-ary relation $U$ encoding the set of all polymorphisms of arity $r$. It then follows from Proposition \ref{prop:DCSPreduction} that if $\dcspgm$ is solvable in finite time then so is $\dcsp(\{U\})$.\lipicsEnd
\end{remark}

\paragraph*{Proofs from Section \ref{sec:complexity}}

\letechnical*
\begin{proof}
If $k=1$ we can just define $\mathbb{S}$ to be the set containing all singletons in $\{0,1,\dots,kn-1\}$ so we can assume that $k\geq 2$.
Pick some $n$ that is a multiple of $k$ and consider the subsets of ${\{0,1,\dots,kn-1\}}$. We say that one such set is bad if $S=S+i \mymod{kn}$ for some $i \neq 0$, and good otherwise where the right-hand side of the equation is a shorthand for the set ${\{s+i \mod{kn} \mid s\in S\}}$. The following facts hold.
\begin{claim}
If $n>\frac{2(d-1)}{k}$, then all subsets of $\{0,\dots,d-1\}$ are good. \end{claim}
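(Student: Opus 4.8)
The plan is to argue by contradiction inside the cyclic group $\mathbb{Z}_{kn}$. Suppose some nonempty $S\subseteq\{0,\dots,d-1\}$ is bad, so that $S=S+i \pmod{kn}$ for some $i\not\equiv 0 \pmod{kn}$. (These are the only subsets that matter, since every set in $\mathbb{S}$ has $k\geq 1$ elements.) The first step is to \emph{normalize the shift}: because translation by a fixed amount is a bijection of $\mathbb{Z}_{kn}$, the identity $S=S+i \pmod{kn}$ implies $S=S-i \pmod{kn}$, i.e.\ $S=S+(kn-i) \pmod{kn}$. Hence I may replace $i$ by $kn-i$ if necessary and assume $1\le i\le kn/2$. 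The underlying intuition is that a set invariant under a nonzero shift must ``wrap around'' the cycle, whereas a subset of the short interval $\{0,\dots,d-1\}$ is too concentrated to do so once $n$ is large; the proof makes this precise by tracking the smallest element of $S$.

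Next I would set $\mu=\min S$ (viewed as an integer in $\{0,\dots,d-1\}$) and use the invariance $S=S-i \pmod{kn}$ to observe that $(\mu-i)\bmod kn\in S$. The argument then splits into two cases according to whether this subtraction wraps around. If $i\le\mu$, then $(\mu-i)\bmod kn=\mu-i$ is a genuine element of $S$ that is strictly smaller than $\mu$, contradicting the minimality of $\mu$. If instead $i>\mu$, then $(\mu-i)\bmod kn=\mu-i+kn$, and since this representative again lies in $S\subseteq\{0,\dots,d-1\}$ it must satisfy $\mu-i+kn\le d-1$, which rearranges to $i\ge kn+\mu-(d-1)\ge kn-(d-1)$.

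Finally I would combine the two bounds. In the wrap-around case we have $kn-(d-1)\le i\le kn/2$, which forces $kn\le 2(d-1)$, i.e.\ $n\le\frac{2(d-1)}{k}$, directly contradicting the hypothesis $n>\frac{2(d-1)}{k}$. Since both cases are impossible, no nonempty subset of $\{0,\dots,d-1\}$ can be bad, which is exactly the claim. I expect the only delicate point to be the second (wrap-around) case: one must check carefully that $(\mu-i)\bmod kn$ really equals $\mu-i+kn$ and that this value again falls inside $\{0,\dots,d-1\}$, since it is precisely this confinement that yields the bound $i\ge kn-(d-1)$ and lets the assumption on $n$ take effect. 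Everything else is routine bookkeeping in $\mathbb{Z}_{kn}$.
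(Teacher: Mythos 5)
Your proof is correct and follows essentially the same route as the paper's: a contradiction argument exploiting that a subset of the short interval $\{0,\dots,d-1\}$ cannot be invariant under a nonzero shift modulo $kn$ once $kn>2(d-1)$. The only cosmetic difference is that you normalize the shift to $i\le kn/2$ and track only the minimum of $S$, whereas the paper skips the normalization and instead sandwiches $i$ using both the minimum and the maximum; the resulting inequality $kn\le 2(d-1)$ is the same.
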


\begin{claimproof}{5}
Let $S \subseteq \{0,\dots,d-1\}$ and assume that $S$ is bad. Then, there exists $i$ such that $S = S+i \mymod{kn}$. Denote by $s_{m}$, $s_{M}$ the smallest and largest elements of $S$ respectively. Then, $i$ must be such that \[kn -(s_{M} - s_{m}) \leq i \leq s_{M} - s_{m}\]
 which implies that $d - 1 < i \leq d-1$, a contradiction.\end{claimproof}

\begin{claim}
There are at least $n^k$ good sets.\end{claim}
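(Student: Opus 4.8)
The plan is to obtain the good sets as the complement of the \emph{bad} sets inside the family of all $\binom{kn}{k}$ many $k$-element subsets of $\{0,1,\dots,kn-1\}$, and to argue that for large $n$ the bad sets are of strictly smaller order, so that more than $n^k$ good sets survive.

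First I would describe the bad sets structurally. For a fixed $S$ the collection of shifts fixing it, namely $\{i : S=S+i \mymod{kn}\}$, is a subgroup of $\mathbb{Z}/(kn\mathbb{Z})$ and hence equals $\langle p\rangle$ for the unique divisor $p\mid kn$ given by the minimal period of $S$; thus $S$ is good precisely when $p=kn$ and bad precisely when $p<kn$. If $S$ has period $p$ then, being invariant under translation by $p$, it is a union of cosets of $\langle p\rangle$, each of size $kn/p$, so $|S|=k$ is a multiple of $kn/p$. This forces $n\mid p$ and shows that such an $S$ is obtained by selecting $p/n$ of the $p$ cosets. Writing $p=en$, the admissible periods correspond to divisors $e\mid k$ with $1\le e<k$, and the number of subsets invariant under the shift $p=en$ is exactly $\binom{en}{e}$.

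Summing over all admissible periods then yields
\begin{equation*}
(\text{number of bad sets}) \;\le\; \sum_{\substack{e\mid k \\ e<k}} \binom{en}{e}.
\end{equation*}
Every proper divisor of $k$ satisfies $e\le k/2\le k-1$ (here we use $k\ge 2$, the case $k=1$ having been disposed of separately), so each summand is a polynomial in $n$ of degree at most $k-1$ and the entire sum is $O(n^{k-1})$.

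For the main term I would use the elementary bound $\binom{kn}{k}>\frac{k^k}{k!}(n-1)^k$ together with $\frac{k^k}{k!}\ge 2$, valid for all $k\ge 2$. Combining the two estimates, the number of good sets is at least
\[
\binom{kn}{k}-\sum_{\substack{e\mid k\\ e<k}}\binom{en}{e} \;\ge\; 2(n-1)^k - O(n^{k-1}) \;=\; 2n^k - O(n^{k-1}),
\]
which exceeds $n^k$ once $n$ is large enough, exactly the regime in which the lemma is stated. I expect the only genuinely delicate point to be the coset/period bookkeeping that produces the divisibility $n\mid p$ and the count $\binom{en}{e}$; the asymptotic comparison afterwards is routine, since the good sets already account for the full leading term $\frac{k^k}{k!}n^k$ while the bad sets are of strictly lower degree in $n$.
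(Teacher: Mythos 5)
Your proof is correct, but it takes a genuinely different route from the paper's. The paper bounds the bad sets combinatorially: it introduces \emph{canonical} bad sets (those not expressible as unions of smaller bad sets), observes these are arithmetic progressions modulo $kn$, decomposes an arbitrary bad $k$-set as a canonical piece of size $j\ge 2$ plus a bad set of size $k-j$, and multiplies the (rather loose) counts $\mathcal{O}(k(k-2)n)\cdot\mathcal{O}(n^{k-2})$ to get $\mathcal{O}(n^{k-1})$ bad sets. You instead exploit the group structure of $\mathbb{Z}/kn\mathbb{Z}$: the stabilizer of $S$ under translation is a subgroup $\langle p\rangle$ with $p\mid kn$, invariance forces $S$ to be a union of cosets of $\langle p\rangle$, and the divisibility bookkeeping ($kn/p$ divides $k$, hence $n\mid p$, $p=en$ with $e\mid k$) yields the exact count $\binom{en}{e}$ of sets invariant under the shift $en$. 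Your bookkeeping is sound: $|S|=j\cdot(kn/p)=k$ gives $j=p/n\in\mathbb{Z}$, every bad set is counted under its minimal period, and since every proper divisor satisfies $e\le k/2$, your bound on the bad sets is in fact $\mathcal{O}(n^{k/2})$ --- strictly sharper than the paper's $\mathcal{O}(n^{k-1})$, though either suffices. You are also more explicit than the paper about the leading constant, via $\binom{kn}{k}>\frac{k^k}{k!}(n-1)^k\ge 2(n-1)^k$ for $k\ge 2$, which is the step the paper leaves implicit in ``at least $n^k$ for $n$ large enough.'' Both arguments correctly set aside $k=1$. In short: same high-level strategy (good $=$ all minus bad, and bad is of lower order), but your structural characterization of the bad sets is cleaner and quantitatively stronger than the paper's canonical-decomposition count.
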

\begin{claimproof}{6}
We say that a bad set is \textit{canonical} if it is not the union of bad sets of smaller size. Observe that in a canonical bad set, the distance between every two consecutive elements is constant. That is, we can write $S=\{s+i \mymod{kn}\}_{i=0}^{|S|}$ for some $s \leq kn$ and $0<i\leq kn$. Now, every bad set of size $k$ is a disjoint union of canonical bad sets, and in particular it is the disjoint union of a canonical bad set $S_{1}$ of size $j$ for some $j \in \{2,\ldots, k\}$, and another bad set $S_{2}$ of size $k-j$. Then, to get a loose upper bound on the number of bad sets we notice that there are at most $k(k-2)n$ choices for $S_{1}$ (since we have $kn$ choices for the first element and at most $k-2$ choices for the number of elements in $S_1$), and at most $\binom{kn-j}{k-j} = O(n^{k-2})$ choices for $S_{2}$, which leaves us with at most $O(n^{k-1})$ bad sets. This implies that there are at least $\binom{kn}{k}-O(n^{k-1})$ good sets, which, since $k\geq 2$, is at least $n^k$ for $n$ large enough.
\end{claimproof}

Therefore, consider the collection of good $k$-element subsets of $\{0,1,\dots,kn-1\}$. We say that two sets $S$, $S'$ are \textit{related} if $S=S'+i \mymod{kn}$ for some $i \neq 0$. Note that, since we are only considering good sets, every class of related sets has exactly $kn$ members and, hence, there are at least $n^k/kn$ many classes. Also it is immediate that every class of related sets satisfies condition (b).

Hence, to construct $\mathbb{S}$ we just need to remove some of the classes of good sets so that we end up having exactly $n^k/kn$ classes, which corresponds to $n^{k}$ sets. We have to keep all the classes containing one of the sets of condition (a), which is always possible if we pick $n$ large enough so that $\binom{d}{k} \leq n^k$.
\end{proof}

\smaxsize*

\begin{proof}
 
 We describe a variation of the distributed colour refinement algorithm that achieves the required bounds. After computing the $k^{th}$ degree and before proceeding to compute the $(k+1)^{th}$ degree, all agents broadcast their $k^{th}$ degree to their neighbours.
At the next round, every agent broadcasts all the $k^{th}$ degrees received (removing repetitions) to its neighbours so that in $2n$ rounds every agent has received a complete list of all the $k^{th}$ degrees of all nodes. Every agent $\alpha(v)$ orders all $k^{th}$ degrees (this can easily be done in such a way that all agents produce the same order), and sets $\delta_k(v)$ to be the rank of its own degree in the order. Then it proceeds to send out this new encoding of $\delta_k(v)$ and to calculate $\delta_{k+1}(v)$ accordingly. 

In this way, we have $s_{\max}=\mathcal{O}(\log(n+m))=\mathcal{O}(\log n)$. Note that the total number of rounds of this algorithm is $\mathcal{O}(n^2)$ and that, provided every set of degrees is stored as an ordered array, the cost of each computation done locally by an agent at a given round is bounded above by the size, $\mathcal{O}((n+m)s_{\max})=\mathcal{O}(ms_{\max})$, of the largest message sent.
\end{proof}

\Asafe*

\begin{proof}
Let $\nu$ be any solution in $I$ satisfying $\nu(y)=\nu(z)$ for every $y,z\in X$ with $y \sim_{\delta} z$ and let $p=x_0c_0\dots,x_\ell$ be any walk in $S$ witnessing that $(x,d)$ is not $S$-supported, (i.e, $p$ is such that $x_0=x$, $x_0 \sim_{\delta} x_\ell$, and $d\not\in \{d\}+_{S} p$). Since $S$ is safe we have that $\nu(y)\in S_y$ for every $y\in X$. It remains to see that $\nu(x)\neq d$, so that the safety condition remains unaltered when $(x,d)$ is removed. First, it follows easily by induction that for every $1\leq i\leq \ell$, $\nu(x_i)\in\{\nu(x)\}+_S p_i$ where $p_i=x_0c_0\dots x_i$. Then, since $\nu(x_\ell)\in\{\nu(x)\}+_S p$, $\nu(x)=\nu(x_\ell)$, and $d \notin \{d\} +_{S} p$, it follows that $\nu(x)\neq d$.
\end{proof}

\end{document}